\newtheorem{proposition}{Proposition}[section]
\newcommand{\pweight}{{Q}} %
\newcommand{\relw}{\nu}    %
\newcommand{\rnp}{\rho}    %
\newcommand{\prc}{\theta}
\newcommand{\cP}{{\cal P}}
\newcommand{\cK}{{\cal K}}
\newlength{\IndentI}
\newlength{\IndentII}
\newlength{\IndentIII}
\newlength{\WidthI}
\newlength{\WidthII}
\newlength{\WidthIII}
\title{New procedures for testing whether stock price processes are martingales}
\author{
  Kei Takeuchi
\footnote{
  Emeritus, Graduate School of Economics, 
  University of Tokyo}
\\
Akimichi Takemura
\footnote{
Graduate School of Information Science and Technology,
University of Tokyo, \ 
7-3-1 Hongo, Bunkyo-ku, Tokyo 113-8656, JAPAN,  \ 
takemura@stat.t.u-tokyo.ac.jp}
\\
Masayuki Kumon\footnote{masayuki\_kumon@smile.odn.ne.jp}
}
\date{July 2009}
\begin{document}
\maketitle

\begin{abstract}
  We propose procedures for testing whether stock price processes 
  are martingales   based on limit order type betting strategies.  We
  first show that the null hypothesis of martingale property of a
  stock price process can be tested based on the capital process of
  a betting strategy.  In particular with high frequency Markov type strategies we
  find that martingale null hypotheses are rejected for many stock 
  price processes.
\end{abstract}

\noindent
{\it Keywords and phrases:} 
betting strategy,
efficient market hypothesis (EMH),
game-theoretic probability,
sequential test.

\section{Introduction}
\label{sec:intro}

The efficient market hypothesis (EMH), that no one with finite capital can
consistently outperform the market,
is the fundamental assumption in the theory
of financial engineering.  In mathematical finance the efficient market
hypothesis is formulated as the martingale property of 
price processes  of tradable assets such as stocks.  

Often the martingale assumption is replaced by a more convenient
assumption of ``random walk''.  Although exact formulation of random
walk depends on literature (e.g.\ \cite[Chapter 2]{campbell-lo-mackinlay},
\cite{beechey-et-al}), the usual assumption is that 
the price process observed at equi-spaced time points
has independent increments with mean zero, 
after adjustment of the systematic trend.
However there are many
empirical studies showing that stock price processes are not random walks
(e.g.\ \cite{campbell-lo-mackinlay}, \cite{lo-mackinlay}).

Note that the class of martingales is
larger than the class of random walks with zero expected increment (\cite{leroy-1989}).  
This implies that rejecting the hypothesis of random walk does not necessarily mean
rejecting the hypothesis of martingale.  Therefore it is desirable to
{\em directly} test the martingale assumption of stock price processes without assuming 
a random walk.

We propose to test the martingale hypothesis of price processes based on
our previous works 
on ``limit order'' type betting strategies 
(\cite{takeuchi-kumon-takemura-bernoulli,
takeuchi-kumon-takemura-markov})
in the framework of game-theoretic probability by Shafer and Vovk (\cite{shafer/vovk:2001}) and an adaptation of the result by Dubins
and Schwarz \cite{dubins-schwarz} to positive  measure-theoretic 
martingales.
As discussed in Section
\ref{subsec:nonnegative-martingale-and-test},
betting strategies in game-theoretic probability naturally
yield sequential testing procedures for the measure-theoretic  martingale
hypothesis.  Moreover our strategies in 
\cite{takeuchi-kumon-takemura-bernoulli,
takeuchi-kumon-takemura-markov} are of very simple form
and provide convenient testing procedures.  

Our testing procedures depend on the direction of a price process
(``ups'' and ``downs'') at times, when
the process hits fixed horizontal grids of prices.   We call these time points
{\em hitting times}. Thus our procedures are
very much different from procedures based on increments of price processes
observed at equi-spaced time points.

One advantage of our testing procedure (compared to equi-spaced
procedures) is that we do not have to worry about specifications
of distributions of the increments, such as the the heaviness of the
tail of the distribution of the increments, because 
in our procedures the amount of the increments are fixed by 
the given grids of prices.  Our procedure depends only on 
the ups or the downs of the price process between hitting times.
This is contrasted with problems of model specifications for testing EMH in 
standard approaches (e.g.\ \cite{timmermann-granger}).

Our approach is similar to
the approach of testing EMH based
on algorithmic complexity of the ups and downs of price processes
in \cite{shmilovici-etal} and \cite{giglio-etal}.  In fact, it is well 
known that betting 
strategies and compression algorithms of binary strings are essentially equivalent
\cite[Chapter 6]{cover-thomas:2006}.
However the approaches of \cite{shmilovici-etal} and \cite{giglio-etal} are based
on price movements observed at equi-spaced time points and therefore
they are different from ours.

The organization of this paper is as follows.
In Section \ref{sec:preliminary} we summarize results on betting
strategies in the framework of game-theoretic probability of Shafer and Vovk
(\cite{shafer/vovk:2001}) and
explain that these betting strategies naturally lead to sequential
testing procedures for the null hypothesis of measure-theoretic martingale.
In Section \ref{sec:martingale} we propose our procedure 
for testing
martingale properties of price processes 
based on limit order type betting strategies.
We give numerical results of testing martingale properties of some Japanese
stock price processes  in Section \ref{sec:numerical}, 
where numerical examinations are also added for the effect of transaction costs on capital processes. 
We conclude our paper with some remarks in Section \ref{sec:discussion}.

\section{Preliminary results}
\label{sec:preliminary}
In this section we give an exposition on betting strategies
in game-theoretic probability based on our earlier  results in \cite{ktt:2007b},
\cite{takeuchi-kumon-takemura-bernoulli} and
\cite{takeuchi-kumon-takemura-markov}.  In Section 
\ref{subsec:nonnegative-martingale-and-test}  we also remark the important
fact that these betting strategies can be used for testing the null hypothesis
of measure-theoretic martingale.

\subsection{Prudent strategies in game-theoretic probability and
  a sequential test of measure-theoretic martingale hypothesis}
\label{subsec:nonnegative-martingale-and-test}

For simplicity of exposition we consider the biased-coin tossing game.
This game is a discrete time game 
played by two players called ``Investor'' and ``Market''.
Investor enters the game with the initial capital of $\cK_0=1$.
For each round Investor first decides how much to bet  and then
Market (after seeing the Investor's move) decides the outcome.
In the biased-coin tossing game, the outcome chosen by Market is either 1 (``up'')
or 0 (``down'').  
The formal protocol of the game is written as follows.

\medskip\noindent
\textsc{Biased-Coin Game} \\
\textbf{Protocol:}

\parshape=6
\IndentI   \WidthI
\IndentI   \WidthI
\IndentII  \WidthII
\IndentII  \WidthII
\IndentII  \WidthII
\IndentI   \WidthI
\noindent
${\cal K}_0 =1$ and $0<\rho<1$ are given.\\
FOR  $n=1, 2, \dots$:\\
  Investor announces $\nu_n\in{\mathbb R}$.\\
  Market announces $x_n\in \{0,1\}$.\\
  ${\cal K}_n = {\cal K}_{n-1} + \nu_n (x_n-\rho)$.\\
END FOR

\medskip
We call $\rho$ the risk-neutral probability. %
Investor can choose $\nu_n$ based 
on the past moves $x_1, \dots, x_{n-1}$ of Market. 
Suppose that Investor adopts a strategy $\cP$, which is a function specifying
$\nu_n$ based on $x_1,\dots, x_{n-1}$ with some initial value $\nu_1$.
\[
 \cP: (x_1, \dots, x_{n-1}) \mapsto \nu_n.
\]
Then Investor's capital at the end of round $n$ is written as
\begin{equation}
\label{eq:capital-process}
\cK_n^{\cP}= \cK_0 + \sum_{i=1}^n \cP(x_1, \dots, x_{i-1})(x_i -\rho).
\end{equation}
A betting strategy $\cP$ of Investor is called {\em prudent} if 
Investor is never bankrupt when using $\cP$, i.e.\ 
$\cK_n^{\cP}\ge 0$ for all $n$ and for all $x_1, x_2, \dots \in \{0,1\}$.

In the framework of game-theoretic  probability of Shafer and Vovk (\cite{shafer/vovk:2001}) there is no probabilistic assumption on the behavior of Market.  Therefore
Market in the above biased-coin game may even be adversarial to Investor.
However the usual measure-theoretic assumption on the behavior of Market is
that Market is oblivious to Investor's moves  and chooses $x_n$
independently as $P(x_n=1)=\rho=1-P(x_n=0)$, ignoring Investor's bet $\nu_n$.   
We write this null hypothesis as 
\begin{equation}
\label{eq:null-hypothesis}
H: p_1=\rho, 
\end{equation}
where $p_1=P(x_n=1)$. Note that the mutual independence of $x_n$, $n=1,2,\dots$, is
also implied by $H$ for the biased-coin game, because the outcome $x_n$ is binary.
If $\cP$ is a prudent strategy, then under $H$, 
$\cK^{\cP}_n$ is a usual measure-theoretic non-negative
martingale.  

By the Markov inequality for non-negative measure-theoretic martingales
(cf.\ Chapter II, Section 57 of \cite{rogers-williams-1})
we have the following sequential testing procedure for $H$.

\begin{proposition}
\label{prop:martingale-test}
Let $0 < \alpha < 1$ be given. 
Reject the null hypothesis $H: p_1=\rho$ as soon as
$\cK^{\cP}_n \ge 1/\alpha$, where $\cP$ is a prudent strategy.
This procedure has the significance level $\alpha$.
\end{proposition}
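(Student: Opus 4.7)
The plan is to recognize that the rejection rule is exactly the event that a nonnegative martingale starting at $1$ ever reaches the level $1/\alpha$, and then apply Doob's maximal inequality.

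First I would verify that under $H$ the capital process $\cK_n^{\cP}$ defined in \eqref{eq:capital-process} is a measure-theoretic martingale with respect to the natural filtration $\mathcal{F}_n = \sigma(x_1,\dots,x_n)$. Since $\cP$ is a deterministic function of the past, $\nu_n = \cP(x_1,\dots,x_{n-1})$ is $\mathcal{F}_{n-1}$-measurable, and under $H$ we have $E[x_n - \rho \mid \mathcal{F}_{n-1}] = \rho - \rho = 0$, so $E[\cK_n^{\cP} \mid \mathcal{F}_{n-1}] = \cK_{n-1}^{\cP}$. Prudence of $\cP$ gives $\cK_n^{\cP} \ge 0$ pathwise, and by construction $\cK_0^{\cP} = 1$.

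Next I would apply Doob's maximal inequality (the Markov inequality for nonnegative martingales referenced in the excerpt, Chapter II, Section 57 of \cite{rogers-williams-1}) to conclude
\begin{equation*}
P\!\left(\sup_{n\ge 0} \cK_n^{\cP} \ge 1/\alpha\right) \le \frac{E[\cK_0^{\cP}]}{1/\alpha} = \alpha.
\end{equation*}
The stopping time at which the procedure rejects is $\tau = \inf\{n : \cK_n^{\cP} \ge 1/\alpha\}$, and the event $\{\tau < \infty\}$ is precisely $\{\sup_{n\ge 0} \cK_n^{\cP} \ge 1/\alpha\}$. Hence the probability of false rejection is at most $\alpha$, which is the claim.

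There is no real obstacle here; the only mildly delicate point is ensuring that one uses the version of Doob's inequality that bounds the probability of the supremum over all $n$ (not just a fixed $n$), since the test is sequential and $\tau$ may be arbitrarily large. This is standard and is exactly the content of the cited result for nonnegative martingales. It is also worth noting in passing that the strategy $\cP$ need not be specified in advance in any adaptive sense beyond being a fixed function of the past, so no issue of randomized stopping or data-dependent strategy selection arises.
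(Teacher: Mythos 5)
Your proposal is correct and follows essentially the same route as the paper: under $H$ the prudent strategy's capital $\cK^{\cP}_n$ is a non-negative measure-theoretic martingale with $\cK_0=1$, and the maximal (Ville/Doob) inequality for non-negative martingales bounds the probability that it ever reaches $1/\alpha$ by $\alpha$. Your remark that one needs the supremum form of the inequality rather than the fixed-$n$ Markov bound is exactly the point the paper covers by its citation of the supermartingale inequality in Rogers--Williams, so nothing is missing.
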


The intuitive interpretation of this proposition is as follows.  $H$ corresponds to 
the efficient market hypothesis (EMH). Investor
can disprove EMH by beating Market, i.e., if he can 
multiply  his capital many times by an appropriate betting strategy.  If Investor
can make his capital 100 times larger than his initial capital $\cK_0$, then
$H$ is rejected at the significance level of 1\%.

We can also use  Ville's inequality (Section 2.5 of \cite{shafer/vovk:2001}, 
page 100 of \cite{ville}), which is now  commonly known as 
Doob's supermartingale inequality ((57.10) of \cite{rogers-williams-1}).  
{}From game-theoretic viewpoint the following procedure
is not very much different from  the procedure in the above proposition,
because it corresponds to stop betting after the hitting
time  $\cK^{\cP}_n \ge 1/\alpha$.
\begin{proposition}
\label{prop:martingale-test-doob}
Let $0 < \alpha < 1$ be and $N>0$ be given. 
Reject the null hypothesis $H: p_1=\rho$ if 
$\max_{0\le n\le N}\cK^{\cP}_n \ge 1/\alpha$, 
where $\cP$ is a prudent strategy.
This procedure has the significance level $\alpha$.
\end{proposition}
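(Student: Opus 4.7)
The plan is to apply Doob's maximal inequality (equivalently Ville's inequality) to the capital process, exactly as the paragraph preceding the proposition suggests. The ingredients needed are (i) that $\{\cK_n^{\cP}\}$ is a nonnegative measure-theoretic martingale under $H$, and (ii) that $\cK_0^{\cP}=1$; the conclusion then falls out by setting the threshold appropriately.

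First I would verify the martingale property. Let $\mathcal{F}_n = \sigma(x_1,\dots,x_n)$, with $\mathcal{F}_0$ trivial. By the definition of a strategy, $\nu_n = \cP(x_1,\dots,x_{n-1})$ is $\mathcal{F}_{n-1}$-measurable, so from \eqref{eq:capital-process}
\[
  E[\cK_n^{\cP}\mid \mathcal{F}_{n-1}] = \cK_{n-1}^{\cP} + \nu_n\, E[x_n-\rho\mid\mathcal{F}_{n-1}].
\]
Under $H$ the variables $x_n$ are i.i.d.\ Bernoulli($\rho$), so $E[x_n-\rho\mid\mathcal{F}_{n-1}]=0$ and hence $E[\cK_n^{\cP}\mid\mathcal{F}_{n-1}]=\cK_{n-1}^{\cP}$. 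Next, prudence of $\cP$ guarantees $\cK_n^{\cP}\ge 0$ for every $n$ and every sample path, so $\{\cK_n^{\cP}\}_{n\ge 0}$ is a nonnegative martingale with $\cK_0^{\cP}=1$ and in particular $E[\cK_n^{\cP}]=1$ for all $n$.

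Now I would apply Doob's maximal inequality for nonnegative (sub)martingales, which is the form of Ville's inequality cited in the text: for any $c>0$,
\[
  P\Bigl(\max_{0\le n\le N}\cK_n^{\cP}\ge c\Bigr) \le \frac{E[\cK_N^{\cP}]}{c} = \frac{1}{c}.
\]
Setting $c=1/\alpha$ gives $P(\max_{0\le n\le N}\cK_n^{\cP}\ge 1/\alpha)\le\alpha$, which is precisely the statement that the rejection rule controls the type I error at level $\alpha$.

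There is essentially no obstacle: the only subtle point is making sure the prudence condition on $\cP$ is invoked to justify nonnegativity (so that the one-sided maximal inequality applies), and that $\nu_n$'s measurability with respect to $\mathcal{F}_{n-1}$ is used to pull it out of the conditional expectation. Both are immediate from the setup, so the proof is a one-line application of Doob's inequality once the martingale structure is noted.
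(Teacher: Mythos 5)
Your proof is correct and follows essentially the same route the paper intends: the paper simply cites Ville's inequality (Doob's supermartingale inequality, (57.10) of Rogers--Williams) applied to the nonnegative martingale $\cK^{\cP}_n$ with $\cK_0=1$, and you have just spelled out the martingale verification and the maximal-inequality step explicitly. The only cosmetic difference is that you invoke the submartingale form with the bound $E[\cK^{\cP}_N]/c$ rather than the supermartingale bound $\cK_0/c$; for a martingale these coincide, so nothing changes.
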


We have stated the above propositions for the protocol of
biased-coin game.  However as in \cite{shafer/vovk:2001} we can consider
more complicated protocols, such as the bounded-forecasting game, where
for each round Market chooses $x_n$ in the bounded interval $[0,1]$.
The measure-theoretic interpretation of the null hypothesis $H$ in 
(\ref{eq:null-hypothesis})
is that $x_n-\rho$, $n=1,2,\dots$, are (uniformly bounded) martingale differences.
In this form, the null hypothesis
$H$ does not place any distributional assumptions on $x_n$, except
for the martingale property.  Therefore we are directly testing
the assumption of martingale property. 
As long as $\cP$ is a prudent strategy,
we can test $H$ by Proposition \ref{prop:martingale-test}.

\subsection{Limit order type strategy in continuous time game and embedded coin-tossing game}
\label{subsec:limit-order}

Although the biased-coin game of the previous subsection is very simple, we can
analyze a continuous time game between Investor and Market with the 
biased-coin game of the previous subsection,
by embedding it into continuous time by a limit order type strategy.

Consider a continuous time game between Investor and Market. 
Market chooses a price path $S(t)$, $t\ge 0$, of a financial asset.
We assume that $S(t)$ is continuous and positive.
Investor enters the market at time $t=t_0=0$  
with the initial capital of ${\cal K}(0)=1$ and he will buy or
sell any amount of the asset at discrete time points 
$0=t_0 <  t_1 < t_2 < \cdots$.
Let $M_i \in {\mathbb R}$ denote the amount of the asset he holds
for the time interval $[t_i,t_{i+1})$. 
Then the  capital of Investor ${\cal K}(t)$ at time $t$ is written as
\begin{align}
\label{eq:2-1}
&{\cal K}(0) = 1, \nonumber \\
&{\cal K}(t) = {\cal K}(t_i) + M_i(S(t) - S(t_i))\quad 
\textrm{for}\ \ t_i \le t  < t_{i+1}.
\end{align}
By defining
$\prc_i = M_iS(t_i)/\cK(t_i)$, 
we rewrite (\ref{eq:2-1}) as
\begin{align*}
{\cal K}(t) = {\cal K}(t_i)\left(1 + \prc_i
\frac{S(t) - S(t_i)}{S(t_i)}\right)\quad 
\textrm{for}\ \ t_i \le  t < t_{i+1} .
\end{align*}

In limit order type strategy, 
Investor takes some constant $\delta >0$ and decides the trading times $t_1, t_2, \dots$ as follows. 
After $t_i$ is determined, 
let $t_{i+1}$ be the first time after $t_i$ when either
\begin{align*}
\frac{S(t_{i+1})}{S(t_i)} = 1 + \delta\quad \textrm{or}
\quad = \frac{1}{1 + \delta}
\end{align*}
happens. 
Let $w_i=t_{i+1}-t_i$ denote the {\em waiting times}
between two successive trading times.
In terms of $\log S$, the waiting times $w_i$ are determined by 
\begin{equation}
\label{eq:waiting-time-def}
  \log S(t_{i+1})-\log S(t_i)=\pm \eta,  \quad \eta=\log(1+\delta)
\qquad (\delta=e^\eta-1).
\end{equation}
This process leads to a discrete time coin-tossing game
embedded in the continuous time game in the following manner.  Let
\begin{align}
x_n &= \frac{(1+\delta) S(t_{n+1}) - S(t_n)}
{\delta(2 + \delta) S(t_n)}
= \begin{cases}
 1, & \text{if}\  S(t_{n+1})=S(t_n) (1+\delta), \\
 0, & \text{if}\  S(t_{n+1})=S(t_n)/(1+\delta),
\end{cases} 
\label{eq:def-xn}
\end{align}
and 
\[
\rnp = \frac{1}{2 + \delta}, \quad
\tilde{{\cal K}}_n = {\cal K}(t_{n+1}),\quad 
\nu_n = \frac{\delta(2 + \delta)}
{1 +\delta}\prc_n. 
\]

Now we have the following discrete time coin-tossing game.

\medskip\noindent
\textsc{Embedded Discrete Time Coin-Tossing Game} \\
\textbf{Protocol:}

\parshape=6
\IndentI   \WidthI
\IndentI   \WidthI
\IndentII  \WidthII
\IndentII  \WidthII
\IndentII  \WidthII
\IndentI   \WidthI
\noindent
$\tilde{{\cal K}}_0 :=1$.\\
FOR  $n=1, 2, \dots$:\\
  Investor announces $\nu_n\in{\mathbb R}$.\\
  Market announces $x_n\in \{0, 1\}$.\\
  $\tilde{{\cal K}}_n = \tilde{{\cal K}}_{n-1}(1 + \nu_n (x_n-\rnp))$.\\
END FOR

\medskip
Combining the embedded discrete time coin-tossing game with
Proposition \ref{prop:martingale-test}, we can test whether a
continuous price process chosen by Market is a measure-theoretic
martingale.   
Suppose that the price process $S(t)$ of Market is a positive 
measure-theoretic martingale.  
We write this null hypothesis as
\begin{equation}
\label{eq:continuous-martingale}
\bar H: S(t) \text{\ is a positive martingale.}
\end{equation}
Under $\bar H$, for every $\delta>0$, 
$x_n$'s in the embedded coin-tossing game satisfy
the null hypothesis $H$ in (\ref{eq:null-hypothesis}) with $\rnp=1/(2+\delta)$. 
Therefore any prudent strategy for the embedded coin-tossing game
can be used as a sequential testing procedure of $\bar H$ by Proposition
\ref{prop:martingale-test}.
A particularly useful betting strategy can be given from Bayesian viewpoint as shown 
in the next section.

\subsection{Bayesian betting strategy based on past number of ups and downs}
\label{subsec:bayesian}

A simple betting strategy for the biased-coin game is 
given by a Bayesian consideration (\cite{ktt:2007b}).  
By the embedded coin-tossing game, it can be
applied to the continuous time game.

Let $h_n=n \bar x_n = \sum_{i=1}^n x_i$ denote the number of heads 
and let $t_n=n-h_n$ denote the number 
of tails up to round $n$ in the biased-coin game.
Fix $a>0, b>0$.
We call the following strategy of  Investor a  beta-binomial strategy 
(with the hyperparameters $a, b$):
\begin{equation}
\label{eq:2-5}
\relw_n =\frac{\hat p_n^\pweight-\rnp}{\rnp(1-\rnp)} 
\qquad \text{where}\quad 
\hat p_n^\pweight 
= \frac{a + h_{n-1}}{a + b + n-1}.
\end{equation}
The capital process $\cK_n$ for this strategy 
is explicitly written as 
\begin{equation}
\cK_n  =
\frac{(a)_{h_n} (b)_{t_n}}{(a+b)_n  \rnp^{h_n}
  (1-\rnp)^{t_n}}, 
\label{eq:baysian-capital}
\end{equation}
where
\[
(c)_l=c(c+1)\cdots(c+l-1)= \frac{\Gamma(c+l)}{\Gamma(c)}
\]
for $c>0$  and non-negative integer $l$.
Since $\cK_n$ in (\ref{eq:baysian-capital}) is always non-negative,
we have a sequential test of $H$ by Proposition \ref{prop:martingale-test}.

An advantage of the beta-binomial strategy is that the asymptotic behavior
of the capital process is easy to study by Stirling's formula.
When $n, h_n$ and $t_n$ are all large,
we can evaluate the log capital as
\[
\log \cK_n
= nD\left(\frac{h_n}{n} \Big\| \rnp\right)  
-\frac{1}{2}\log n + O(1), 
\]
where
\begin{align*}
D(p\|q) = p\log\frac{p}{q} + (1 - p)\log\frac{1 - p}{1 - q}
\end{align*}
denotes the Kullback-Leibler information between 
$0 < p < 1$ and $0 < q < 1$. 

Now we move onto the embedded coin-tossing game.
Suppose that Investor trades in a finite time interval $[0, T]$
and he uses 
the Bayesian strategy in (\ref{eq:2-5}) for the embedded coin-tossing game.
We define  $n^* = n^*(T, \delta, S(\cdot))$ by 
$t_{n^*} < T \le t_{n^*+1}$. 
Investor's capital 
${\cal K}(T) = {\cal K}^{{\cal P}^{\delta, a, b}}(T,S(\cdot))$ at $t = T$ 
for large $n^*$ is written as
 \begin{align*}
 {\cal K}(T) 
 = \tilde{{\cal K}}^*_{n^*}\left(1 + \prc_{n^*}^*
 \frac{S(T) - S(t_{n^*})}{S(t_{n^*})}\right),\quad 
 \prc_{n^*}^* = \frac{1 + \delta}{\delta (2 + \delta)}\nu_{n^*}^*.
 \end{align*} 
Since $\left|\frac{S(T) - S(t_{n^*})}{S(t_{n^*})}\right| 
< \delta$, we have
\begin{align}
\label{eq:3-1}
\log {\cal K}(T) = \log \tilde{{\cal K}}^*_{n^*} + O(1) 
= n^*D\left(\frac{h_{n^*}}{n^*} \Big\| \rnp \right)  
-\frac{1}{2}\log n^* + O(1).
\end{align} 

\subsection{Generality of high-frequency limit order strategy}

This subsection is a part which is rather independent of the
previous sections. Here 
we show a generality of the high-frequency limit order strategy
developed 
in \cite{takeuchi-kumon-takemura-bernoulli}, which implies that when
the asset price $S(t)$ follows the geometrical Brownian motion, our
strategy automatically incorporates the well-known constant proportional betting
strategy originated with Kelly (\cite{kelly}) and   
yields the likelihood ratio in the Girsanov's theorem for
geometric Brownian motion.  The convergence results in this subsection
are of measure-theoretic almost everywhere convergence.

Let $S(t)$ be subject to the geometrical Brownian motion with drift $\mu$ and volatility $\sigma$.
Then 
\begin{align*}
\log S(T)- \log S(0)
 = \bigg(\mu-\frac{1}{2}\sigma^2\bigg)T + \sigma W(T),
\end{align*}
where $W(\cdot)$ denotes the standard Brownian motion. 
In the following we write 
\[
L(T) = \log S(T)- \log S(0).
\]

We let $T\rightarrow\infty$ and 
let  $\eta=\eta_T$ depend on $T$ in such a way that $|\log \eta_T|=o(\sqrt{T})$.
Similarly we denote $\delta_T=e^{\eta_T}-1$, $\rnp_T=1/(2+\delta_T)$.
Define
\begin{align*}
&TV(\eta_T, T) = \sum_{i=1}^{n^*} |\log S(t_i) - \log S(t_{i-1})| , 
\quad 
L(\eta_T,T) = \log S(t_{n^*}) - \log S(0)  ,
\\
&  \qquad \qquad \zeta(\eta_T, T) = \frac{L(\eta_T,T)}{TV(\eta_T, T)}.
\end{align*}
We call $TV(\eta_T, T)$ the total $\eta$-variation of $\log S(t)$ in the interval $[0, T]$.
Then we have
\begin{align*}
\eta_TTV(\eta_T,T) = n^*\eta_T^2 = \sigma^2T + O(\eta_T),
\end{align*}
and hence we can evaluate
\begin{align*}
\theta_{n^*}^* = \frac{\mu}{\sigma^2}+ \frac{W(T)}{\sigma T} + O(\eta_T).
\end{align*}
Note that when $L(T)$ is a process symmetric around the origin with 
$\mu - \frac{1}{2}\sigma^2 = 0$, we have
\begin{align*}
\theta_{n^*}^* = \frac{1}{2}+ \frac{W(T)}{\sigma T} + O(\eta_T),
\end{align*}
and the main term\ $1/2$ in the right-hand side indicates the even
rebalanced strategy between the asset and the cash. 

Let us consider $n^* D(p(\eta_T,T) \| \rnp_T)$,
where $n^* = TV(\eta_t, T)/\eta_T$, $p(\eta_T, T)=h_{n^*}/n^* 
=(1 + \zeta(\eta_T,T))/2$.
{}From the Taylor expansion
\begin{align*}
D\left(\frac{1 + d_1}{2}\Big\| \frac{1 + d_2}{2}\right) = 
\frac{1}{2}(d_1 - d_2)^2 + O(|d_1 - d_2|^3),
\end{align*}
with $d_1 = \zeta(\eta_T, T),\ d_2 = - \delta_T/2$, we have 
\begin{align*}
n^* D\left(p(\eta_T, T) \| \rnp_T \right) &= 
\frac{n^*}{2}\bigg(\zeta(\eta_T,T)+\frac{\delta_T}{2}\bigg)^2+O(\eta_T^3)
= \frac{n^*\eta_T^2}{2}(\theta_{n^*}^*)^2+O(\eta_T^3)\\
&= \frac{\sigma^2T}{2}\bigg(\frac{\mu}{\sigma^2}+\frac{W(T)}{\sigma T}\bigg)^2 = \frac{\mu W(T)}{\sigma} + \frac{\mu^2 T}{2\sigma^2} + 
\frac{W^2(T)}{2T} + O(\eta_T^3).
\end{align*}
The log capital
$\log {\cal K}(T) = n^*D\left(p(\eta_T, T) \| \rnp_T\right)  -\frac{1}{2}\log n^* + O(1)$
is expressed as
\begin{align*}
\log {\cal K}(T) = \frac{\mu W(T)}{\sigma} + \frac{\mu^2 T}{2\sigma^2} -\frac{1}{2}\log T + \log \eta_T + O(1),
\end{align*}
and hence when $|\log \eta_T| = o(\sqrt{T})$, the main terms in the right-hand side of $-\log {\cal K}(T)$,
\begin{align*}
-\log {\cal K}(T) = -\frac{\mu W(T)}{\sigma} - \frac{\mu^2 T}{2\sigma^2} 
 + o(\sqrt{T})
\end{align*}
provide the likelihood ratio of the unique martingale measure known as the Girsanov's theorem, and we obtain
\begin{align*}
\lim_{T\to \infty}\frac{\log {\cal K}(T)}{T} = \frac{\mu^2}{2\sigma^2}.
\end{align*}

\subsection{Markov betting strategy}
\label{subsec:Markov}
The Bayesian strategy in previous subsections is a simple strategy  based on
the past number of ups and downs only.  The strategy does not exploit
possible autocorrelations in the ups and downs of the price process.
Multistep Bayesian strategy, in particular the Markov type strategy 
in \cite{takeuchi-kumon-takemura-markov} is very efficient in
exploiting possible autocorrelations.  In this paper we just use the first-order  Markov strategy.

For the biased-coin game the strategy is given as
\[
\nu_1 = 0,\quad
\nu_n = 
\begin{cases} \nu_n^+ , &  \textrm{if}\ x_{n-1} = 1 \\
              \nu_n^-, & \textrm{if}\ x_{n-1} = 0
\end{cases}
\quad n = 2, 3, \dots,
\]
where $\nu_n^+$ and $\nu_n^-$ can have different values.  It 
incorporates the information on the last move $x_{n-1}$ of Market.

We use the beta-binomial strategy separately for 
the case of $x_{n-1} = 1$  and $x_{n-1}=0$ with hyperparameters $a,b$.
Let $q_n^1=h_n$ and $q_n^0=n-h_n$.
Denote the numbers of pairs $(x_{i-1} x_i) = (1
1),(10), (01), (00)$, $i = 2, \dots, n$, by $q_n^{11},\  q_n^{10},\
q_n^{01},\ q_n^{00}$, respectively.
The capital $\cK_n=\cK_n^{\cP_Q}$  for this strategy is given by
\begin{align*}
\cK_n 
&= \frac{\Gamma(a + b)^2\Gamma(q_n^{11} + a)
\Gamma(q_n^{10} + b)\Gamma(q_n^{01} + a)\Gamma(q_n^{00} + b)}
{\Gamma(a)^2\Gamma(b)^2\Gamma(q_n^{11} + q_n^{10} + a + b)
\Gamma(q_n^{01} + q_n^{00} + a + b)
\rnp^{q_n^{11}+q_n^{01}}(1 - \rnp)^{q_n^{10}+q_n^{00}}}.
\end{align*}
By Stirling's formula the asymptotic behavior of $\cK_n$ is easily derived.

We can apply the above first-order Markov strategy to the embedded coin-tossing
game for continuous price paths.  In 
\cite{takeuchi-kumon-takemura-markov} the performance of the strategy
for small grid size $\delta$ is analyzed as follows.  Under some regularity
conditions, if the price process path has the H\"older exponent $H\neq 1/2$, then
\begin{align}
\label{eq:2-5-1}
\log  \cK_{n^*}  = n^* D\Big(\frac{1}{2^{1/H-1}} \Big\| \frac{1}{2} \Big) + o(n^*).
\end{align}

\section{Tests of martingale property}
\label{sec:martingale}

In Section \ref{subsec:limit-order} we have already shown that
we can test $\bar H$ in (\ref{eq:continuous-martingale}) by limit order type
betting strategy.  It is an important fact that the converse is also true.
If the null hypothesis
$H$ in (\ref{eq:null-hypothesis}) holds for embedded discrete time coin-tossing
game for every $\delta>0$, then $\bar H$ holds.  This fact can be proved
by adapting the result of Dubins and Schwartz 
\cite{dubins-schwarz} to positive  martingales.
A more rigorous and modern treatment of the result of Dubins and
Schwartz is given in Chapter V of \cite{revuz-yor}.  Recently
Vovk \cite{vovk0904.4364} gave a complete generalization of these
results to game-theoretic framework.
However for our purposes, the arguments given in \cite{dubins-schwarz}
are sufficient and more suitable, because they are 
based on similar ideas to our limit order type strategies.

\begin{proposition}
\label{prop:equivalence}
Let $S(t)$, $t\ge 0$, 
be a continuous positive stochastic process with $S(0)=1$ such that
almost all of whose paths are nowhere constant 
and $\limsup_t S(t)=\infty$, $\liminf_t S(t)=0$.
Then the following three conditions are equivalent.
\begin{enumerate}
\setlength{\itemsep}{-2pt}
\item $S$ is a martingale.
\item $S$ is a path-dependent and future-independent time change of the
standard geometric Brownian motion.
\item For every $\eta>0$,  
the directions  $x_j$, $j=1,2,\dots$, in (\ref{eq:def-xn}) are independently and identically 
distributed  with
\[
P(x_j=1)= \frac{1}{1+e^{\eta}}
\]
and they are independent of the waiting times  $\{ w_j, j=1,2,\dots\}$.
\end{enumerate}
\end{proposition}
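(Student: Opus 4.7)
The plan is to prove the cyclic equivalence $(2)\Rightarrow(1)\Rightarrow(3)\Rightarrow(2)$, with the Dubins--Schwarz theorem applied to $\log S$ serving as the main engine.

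For $(1)\Rightarrow(2)$, It\^o's formula gives the semimartingale decomposition $\log S(t)=M(t)-\tfrac{1}{2}\langle M\rangle_t$ with $M$ a continuous local martingale and $\langle M\rangle_t=\langle \log S\rangle_t$. The hypotheses $\limsup_t S(t)=\infty$ and $\liminf_t S(t)=0$ preclude $\log S$ from converging at infinity, forcing $\langle M\rangle_\infty=\infty$ almost surely. Dubins--Schwarz (Chapter~V of \cite{revuz-yor}) then produces a standard Brownian motion $B$ with $M(t)=B(\langle M\rangle_t)$; setting $A(t)=\langle M\rangle_t$ and $G(u)=\exp(B(u)-u/2)$ yields $S=G\circ A$, the required path-dependent future-independent time change of standard GBM. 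The converse $(2)\Rightarrow(1)$ follows from the fact that a continuous adapted time change preserves the martingale property of $G$ on bounded intervals, and $S$ stopped at any grid hit is uniformly bounded so local becomes true martingale.

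For $(1)\Rightarrow(3)$, the stopped martingale $S^{t_{j+1}}$ is bounded between $S(t_j)e^{\pm\eta}$ on $[t_j,t_{j+1}]$, so optional stopping gives $E[S(t_{j+1})\mid\mathcal F_{t_j}]=S(t_j)$. Combined with $S(t_{j+1})/S(t_j)\in\{e^\eta,e^{-\eta}\}$, this forces $P(x_{j+1}=1\mid\mathcal F_{t_j})=1/(1+e^\eta)$, giving the i.i.d.\ Bernoulli conclusion. For independence of $(x_j)$ from $(w_j)$, use $(2)$ to pull hitting times of $\log S$ at $\eta\mathbb Z$ back to hitting times $\sigma_j$ of $\log G(u)=B(u)-u/2$ via $A$. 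The key fact, specific to the drift $-1/2$, is that for $\log G$ the exit direction $\tilde x_1$ and the exit time $\sigma_1$ from $(-\eta,\eta)$ are independent: solving the ODE $\tfrac{1}{2}f''-\tfrac{1}{2}f'-sf=0$ with boundary values $f(\pm\eta)\in\{0,1\}$ yields
\[
E_0\bigl[e^{-s\sigma_1};\,\tilde x_1=k\bigr]=\frac{e^{(1-2k)\eta/2}}{2\cosh\bigl(\eta\sqrt{1/4+2s}\bigr)},\qquad k\in\{0,1\},
\]
so the conditional Laplace transform of $\sigma_1$ given $\tilde x_1$ collapses to $\cosh(\eta/2)/\cosh(\eta\sqrt{1/4+2s})$ regardless of $\tilde x_1$. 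Strong Markov of $B$ at each $\sigma_j$ propagates the independence to the full sequence, and since each $w_j$ is determined by the direction-symmetric excursion of $G$ on $[\sigma_j,\sigma_{j+1}]$ together with $A$ on the corresponding interval, the independence survives composition with $A^{-1}$.

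For $(3)\Rightarrow(1)$, the i.i.d.\ Bernoulli hypothesis at each grid size $\eta$ makes the embedded walk $\{S(t_j^\eta)\}$ a discrete-time martingale, since a direct computation gives $E[e^{(2x_{j+1}-1)\eta}]=1$. For any $s<t$ and bounded $Z\in\mathcal F_s$, optional sampling yields $E[Z\,S(t_{n^\eta(t)}^\eta)]=E[Z\,S(t_{n^\eta(s)}^\eta)]$ where $n^\eta(u)$ indexes the last grid hit before $u$; the pathwise bound $|S(t_{n^\eta(u)}^\eta)/S(u)-1|\le e^\eta-1$ combined with continuity of $S$ allows passage to the limit $\eta\downarrow 0$, yielding $E[ZS(t)]=E[ZS(s)]$ and hence the martingale property of $S$. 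The main obstacle is the independence claim in $(1)\Rightarrow(3)$, which hinges on the drift $-1/2$ of $\log G$ being exactly the coefficient that makes $G$ a martingale, and whose preservation under an arbitrary path-dependent time change requires careful bookkeeping via the DDS decomposition.
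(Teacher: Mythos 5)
Your overall architecture (routing every implication through condition 1, with It\^o plus Dubins--Schwarz for $1\Leftrightarrow 2$) differs from the paper, which declares $2\Rightarrow 1$ and $1\Rightarrow 3$ obvious and spends its proof on $3\Rightarrow 2$, refining the grid dyadically ($\eta_m=2^{-m}\eta$) and using the CLT to identify the finite-dimensional distributions of $\log S$ at the trading times with those of Brownian motion with drift $-\eta/2$ per crossing, then invoking the Dubins--Schwarz construction. The genuine gap is in your replacement for that core step, namely $(3)\Rightarrow(1)$. Condition 3 at a fixed scale $\eta$ only makes the embedded walk $S(t^\eta_j)$ a martingale with respect to the $\sigma$-field generated by the scale-$\eta$ directions and the waiting times; an arbitrary bounded $Z\in\mathcal{F}_s$ also encodes the fine structure of the path inside the current excursion (how far it has moved toward the next grid line, etc.), about which the scale-$\eta$ hypothesis says nothing. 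Hence the asserted identity $E[Z\,S(t^\eta_{n^\eta(t)})]=E[Z\,S(t^\eta_{n^\eta(s)})]$ is not licensed by optional sampling at any fixed $\eta$; the whole force of ``for every $\eta$'' is that the scales must be combined (e.g.\ by showing the multi-scale crossing skeleton generates $\mathcal{F}_s$, or by the paper's refinement-plus-CLT identification of conditional laws), and that combination is exactly what your argument omits. In addition, optional sampling over the a.s.\ finite but unbounded number of crossings of $[0,t]$, and the interchange of limit and expectation as $\eta\downarrow 0$, both need a uniform-integrability or domination argument you do not supply: a priori $\sup_{u\le t}S(u)$ need not be integrable, and integrability on compacts is part of what $(3)\Rightarrow(1)$ has to deliver.

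A second, smaller gap is the waiting-time independence in your $(1)\Rightarrow(3)$. The Laplace-transform computation showing that for Brownian motion with drift $-1/2$ the exit time and exit direction from $(-\eta,\eta)$ are independent is correct and elegant, but it concerns the intrinsic (GBM-clock) exit times $\sigma_j$, not the real-time waiting times $w_j$, which are increments of $A^{-1}$ at the $\sigma_j$. Since the time change $A$ is path dependent it may itself depend on past directions or on where the current excursion sits in its interval (e.g.\ run faster after an up-crossing), so independence under the GBM clock does not automatically ``survive composition with $A^{-1}$''; this is precisely the bookkeeping you defer, and it is not routine (the paper avoids the issue only by calling $1\Rightarrow 3$ obvious). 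Similarly, in $2\Rightarrow 1$ boundedness of $S$ stopped at a grid hit shows the stopped process is a martingale, not that $S$ is; passing from local to true martingale for a time-changed GBM requires a uniform-integrability argument on compacts rather than the one-line remark you give.
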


\begin{proof}
Since our proof is a simple adaptation of the proof in
\cite{dubins-schwarz}  we only give an outline of the proof.
The implications 2 $\Rightarrow$ 1 and 1 $\Rightarrow$ 3 are obvious.
Therefore it suffices to prove 3 $\Rightarrow$ 2.  By examining the
proof in \cite{dubins-schwarz}, we note that 
the only difference in our proof 
is the finite dimensional distributions of $\log S(t)$ at the
trading times $t_1, t_2,\dots$. For simplicity we consider the distribution
of $\log S(t_2)-\log S(t_1)$ for any fixed $\eta$.  Now consider a
sequence of increasingly finer horizontal
grids $\eta_m = 2^{-m}\eta$ in the logarithmic scale for $\log S$. Then
$\log S(t_2)-\log S(t_1)$ ca be written 
\[
\log S(t_2)-\log S(t_1) = \sum_{j=1}^{2^m} \epsilon_{mj}\eta_m,
\]
where $\epsilon_{mj}$, $j=1,\dots,2^m$, are i.i.d.\ $\pm 1$
random variables with $P(\epsilon_{ml}=1)=1/(1+e^{\eta_m})$. 
The mean and the variance of $\sum_{j=1}^{2^m} \epsilon_{mj}\eta_m$
are given  by
\[
2^m (2p_{mj}-1), 
 \quad 4 \times 2^m p_{mj}(1-p_{mj}), \qquad
 \big(\ p_{mj}=\frac{1}{1+e^{\eta_m}} \ \big).
\]
As $m\rightarrow\infty$,  these converge to
$-\eta/2$ and $\eta$, respectively.  Also by the central limit theorem
$\log S(t_2)-\log S(t_1)$ is distributed according to
$N(-\eta/2, \eta)$. Considering any finite number 
trading times, we see that finite-dimensional distributions
are the same as the geometric Brownian motion.
Now an argument similar to \cite{dubins-schwarz}
shows that $S$ satisfies 2.
\end{proof}

Because of  Proposition \ref{prop:equivalence} it is natural
to test the martingale hypothesis
(\ref{eq:continuous-martingale}) by testing
that the directions  $x_j$, $j=1,2,\dots$, are i.i.d.\ \{0, 1\} valued  
random variables with the probability $P(x_j=1)=1/(1+e^{\eta})$.
Note that we can interpret the distribution of the waiting
times as nuisance parameters of the null hypothesis.

In the next section we use two strategies and associated sequential tests.
The first strategy is  a simple Bayesian strategy of 
Section \ref{subsec:bayesian}
concerning the success probability $P(x_j=1)=1/(1+e^{\eta})$. The
second strategy is the first-order Markov strategy  of Section 
\ref{subsec:Markov} for testing independence.

\section{Numerical examples}
\label{sec:numerical}

In this section we give some numerical examples on the stock price
data from the Tokyo Stock Exchange. The data are the stock minute
prices from June 1st to July 31st in 2006 for three Japanese companies
SoftBank, IHI, and Sony listed on the first section of the TSE, which were adapted from Bloomberg LP. Usually there are 270 minute price data a day.

We employed a simple Bayesian strategy and a Markov strategy.  The former
did not show significant results.  This is because
the empirical probabilities  of heads ($p^1({\rm ft})$ in Table 1) 
are close to $0.5$.
However we obtained
significant
results by the first-order Markov strategy for three companies 
with common values $\eta = 2^{-k},\ a, b = 0.01\cdot 2^k,\ k = 8$.
The results are shown in Figures 1--15.  Figures 1--5 are for SoftBank,
Figures 6--10 are for IHI and Figures 11--15 are for Sony.
In each figure, fn denotes the first round such that the Markov capital MK satisfies $\textrm{MK(fn)} \ge 10^3$, and ft denotes the approximate time of fn in minutes. 
By Proposition \ref{prop:martingale-test}, $10^3$ corresponds
to the significance level of $\alpha=0.1\%$.  

We also exhibit processes of empirical probabilities $p^{1|1}, p^{0|0}, p^1$, and processes of H\"older exponents $H^1, H^0$, which are given in the following manner.
\begin{align}
\label{eq:5-1}
&p^{1|1}_n = \frac{q^{11}_n}{q^1_n},\quad 
p^{0|0}_n = \frac{q^{00}_n}{q^0_n},\quad  
p^1_n = \frac{q^1_n}{n},\nonumber \\
&\frac{1}{2^{1/H^1_n} - 1} = p^{1|1}_n,\quad 
\frac{1}{2^{1/H^0_n} - 1} = p^{0|0}_n.
\end{align}
The relation (\ref{eq:5-1}) between the conditional probability and
the H\"older exponent is one of the results obtained in
\cite{takeuchi-kumon-takemura-markov}. These typical values are summarized in 
Table 1.

\begin{figure}[htbp]
\begin{minipage}{.5\linewidth}
\begin{flushleft}
Figure 1 : Minute prices of SoftBank\\
Figure 2 : Capital process of Markov\\
\qquad \qquad \quad strategy\\
Figure 3 : Log capital process of Markov\\
\qquad \qquad \quad strategy\\
Figure 4 : Processes of empirical \\
\qquad \qquad \quad probabilities $p^{1|1},\ p^{0|0},\ p^1$\\
Figure 5 : Processes of H\"older exponents \\
\qquad \qquad \quad $H^1, H^0$
\end{flushleft}
\end{minipage}
\begin{minipage}{.5\linewidth}
\begin{center}
\includegraphics[width=8cm,height=7cm]{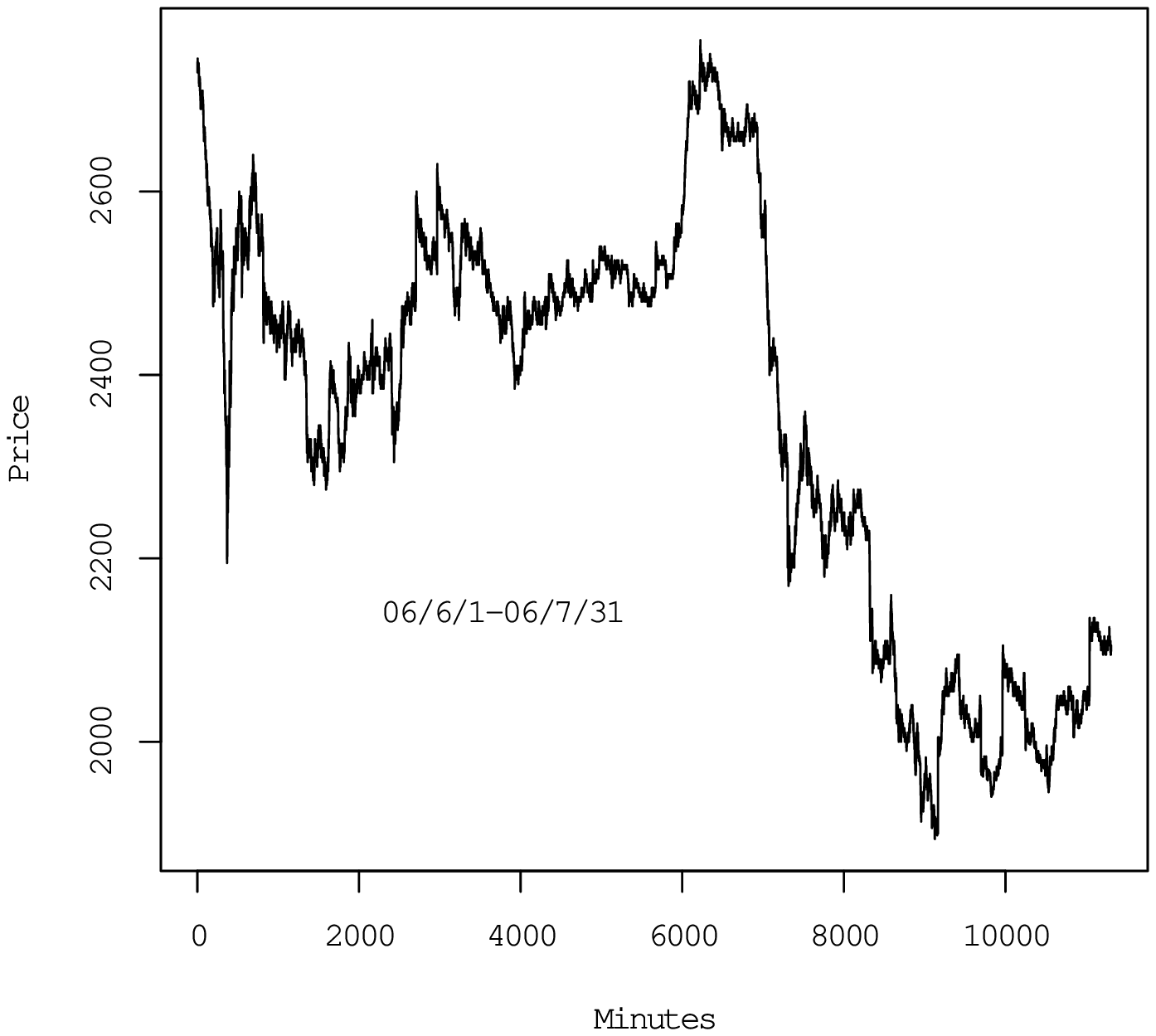}
\vspace*{-11mm}   
\caption{SoftBank minute prices}
\label{fig:1}
\end{center}
\end{minipage}
\begin{minipage}{.5\linewidth}
\begin{center}
\includegraphics[width=8cm,height=7cm]{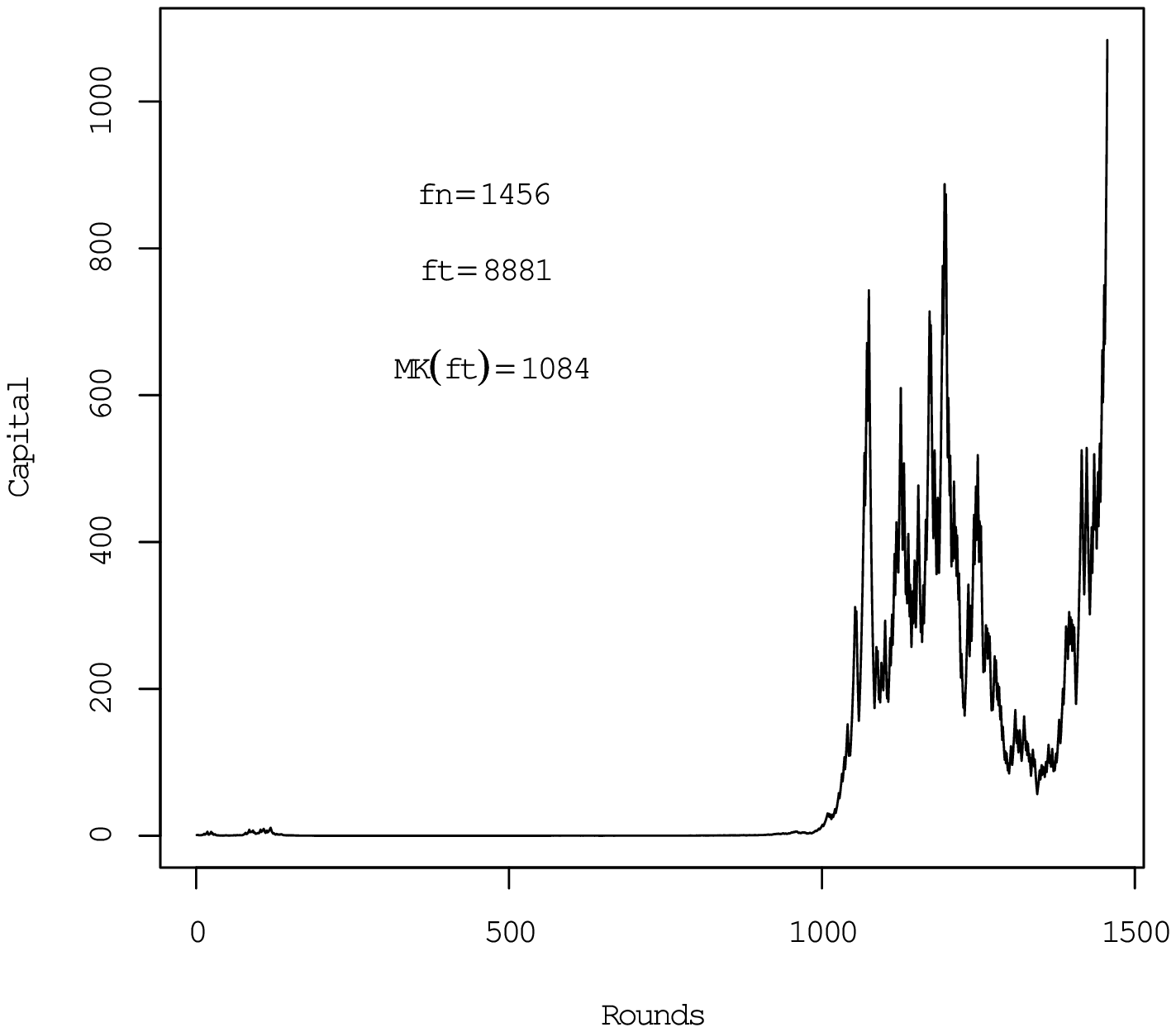}
\vspace*{-11mm}   
\caption{Markov capital process}
\label{fig:2}
\end{center}
\end{minipage}
\begin{minipage}{.5\linewidth}
\begin{center}
\includegraphics[width=8cm,height=7cm]{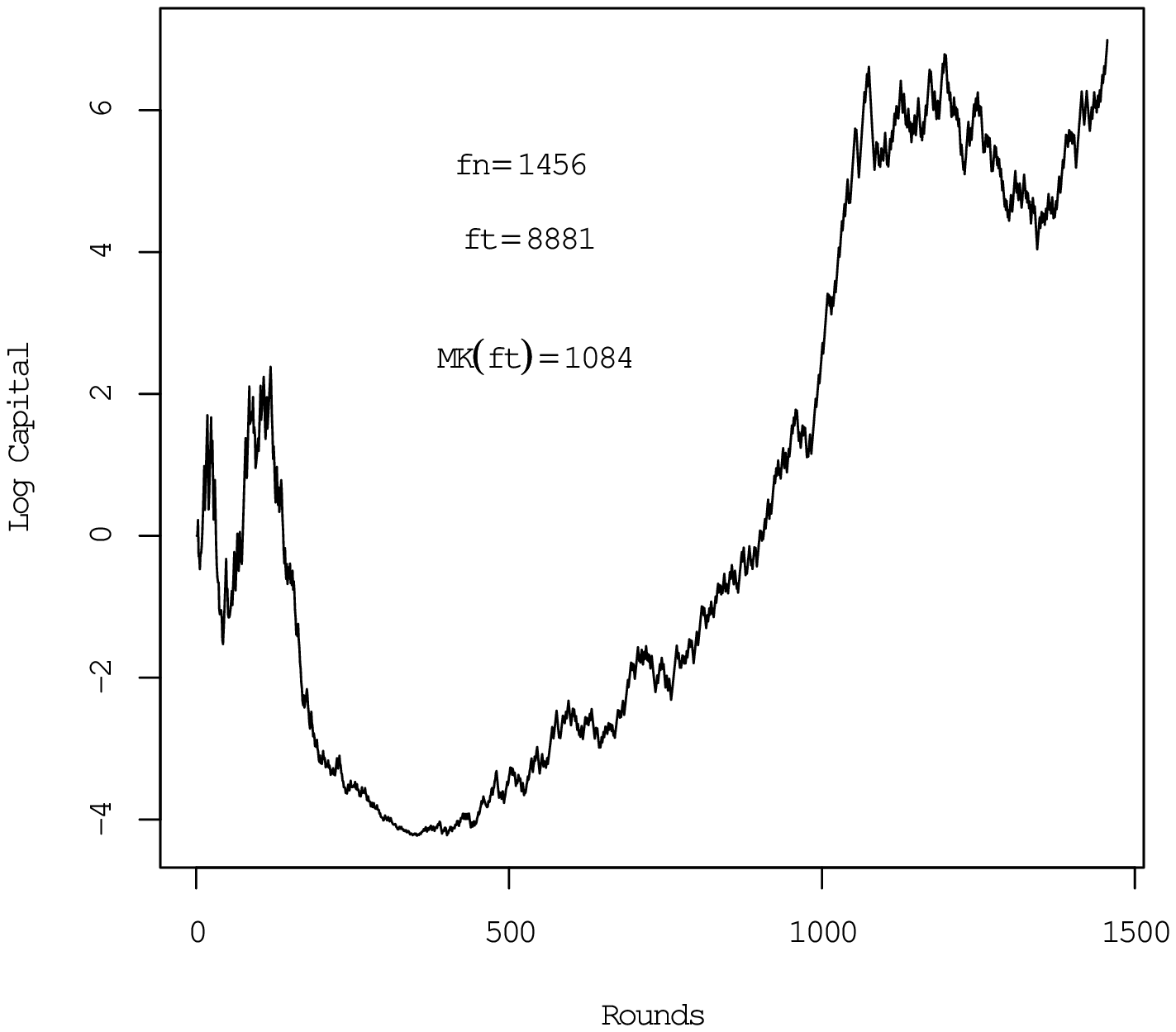}
\vspace*{-11mm}   
\caption{Log Markov capital process}
\label{fig:3}
\end{center}
\end{minipage}
\begin{minipage}{.5\linewidth}
\begin{center}
\includegraphics[width=8cm,height=7cm]{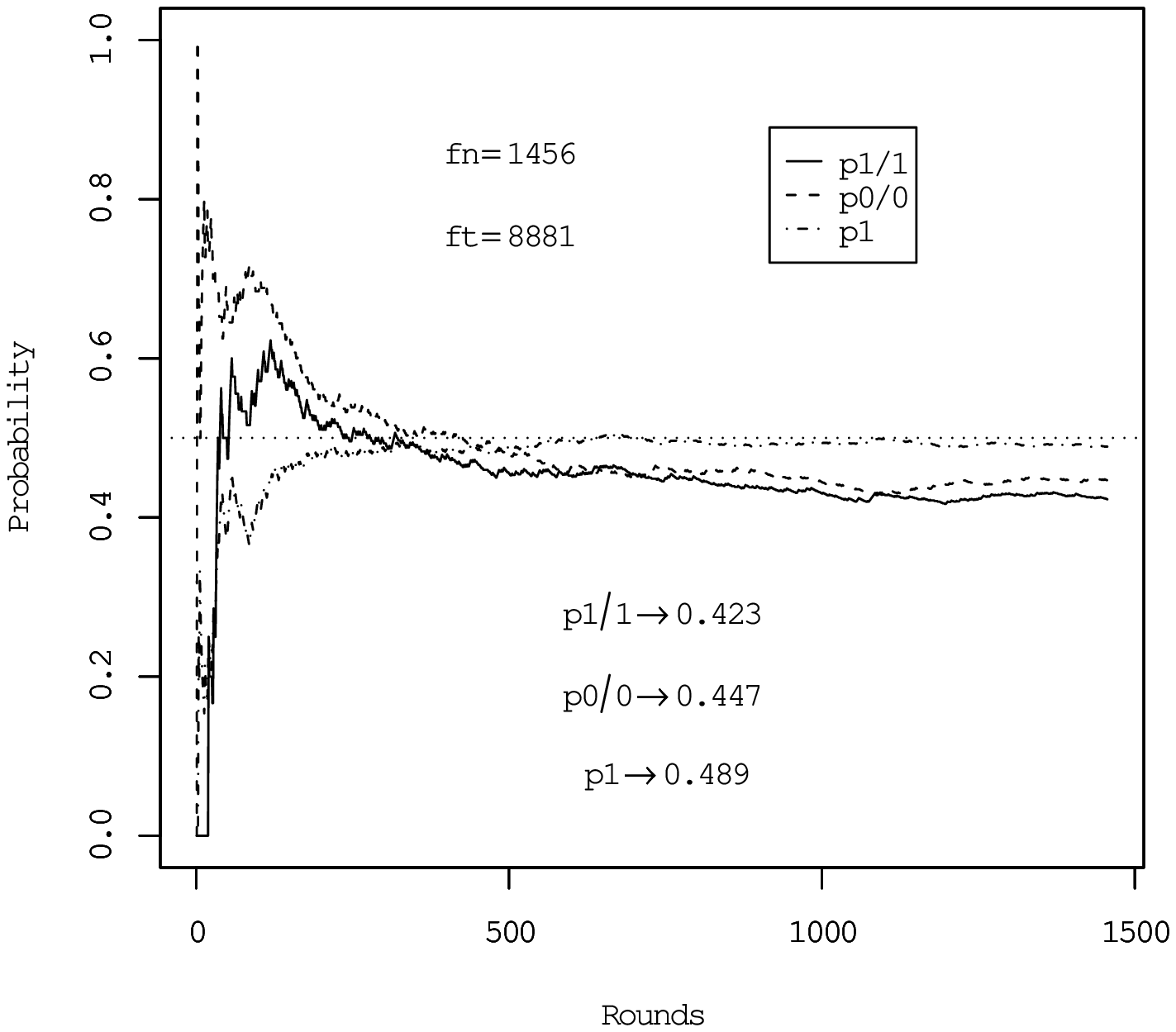}
\vspace*{-11mm}   
\caption{Empirical probability processes}
\label{fig:4}
\end{center}
\end{minipage}
\begin{minipage}{.5\linewidth}
\begin{center}
\includegraphics[width=8cm,height=7cm]{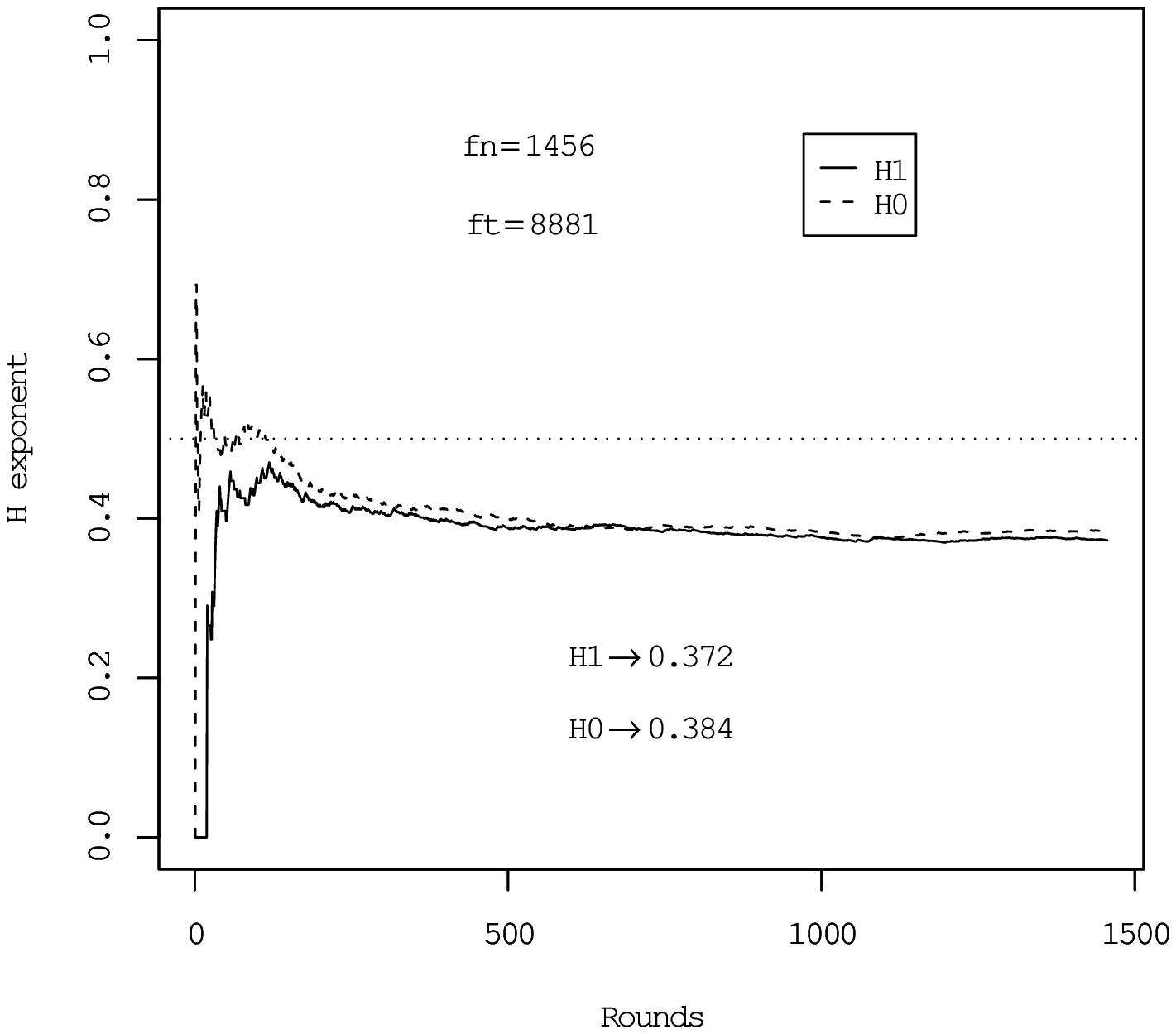}
\vspace*{-11mm}   
\caption{H\"older exponent processes}
\label{fig:5}
\end{center}
\end{minipage}
\end{figure}

\begin{figure}[htbp]
\begin{minipage}{.5\linewidth}
\begin{flushleft}
Figure 6 : Minute prices of IHI\\
Figure 7 : Capital process of Markov\\
\qquad \qquad \quad strategy\\
Figure 8 : Log capital process of Markov\\
\qquad \qquad \quad strategy\\
Figure 9 : Processes of empirical \\
\qquad \qquad \quad probabilities $p^{1|1},\ p^{0|0},\ p^1$\\
Figure 10 : Processes of H\"older exponents \\
\qquad \qquad \quad \ $H^1, H^0$
\end{flushleft}
\end{minipage}
\begin{minipage}{.5\linewidth}
\begin{center}
\includegraphics[width=8cm,height=7cm]{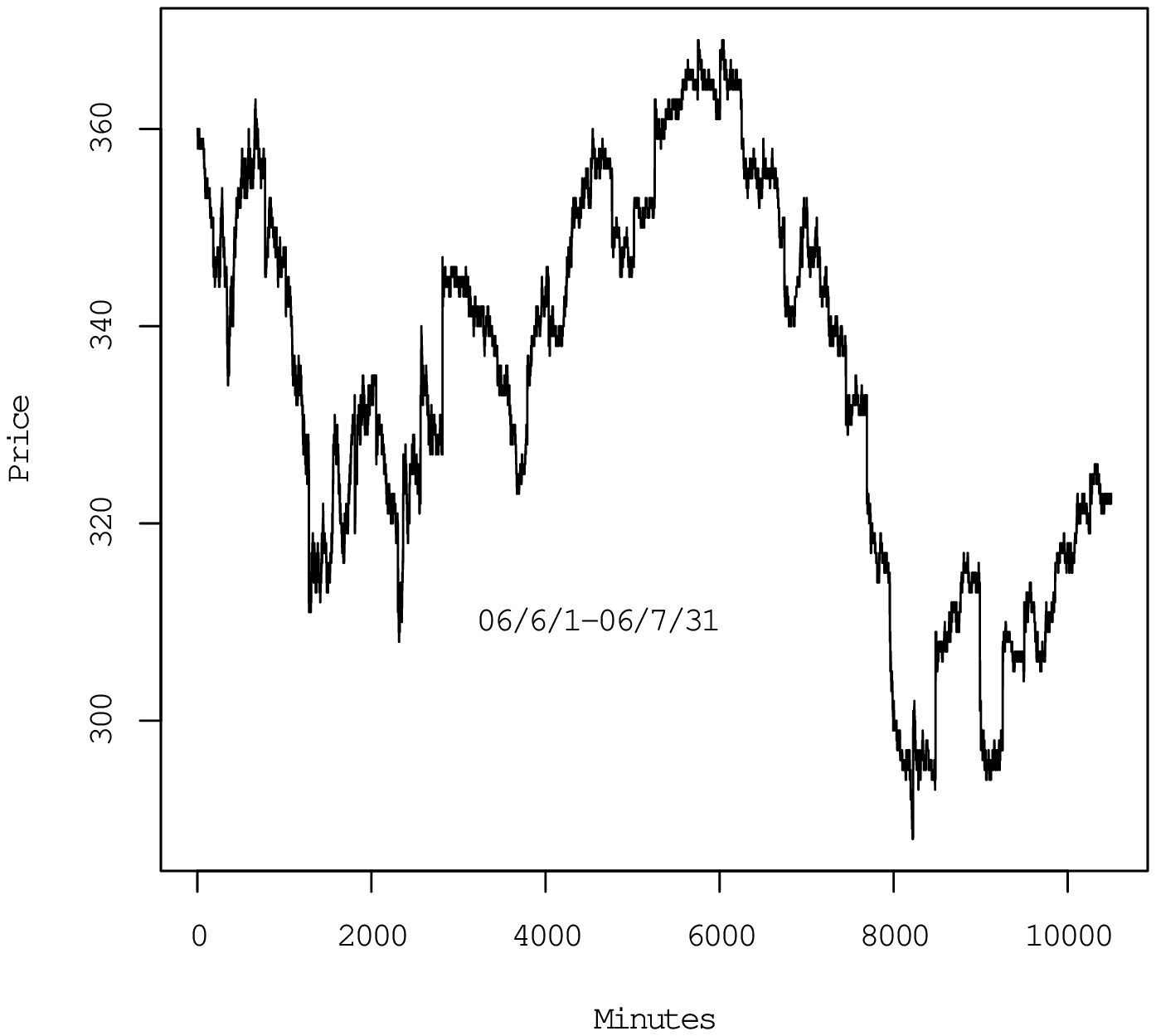}
\vspace*{-11mm}   
\caption{IHI minute prices}
\label{fig:6}
\end{center}
\end{minipage}
\begin{minipage}{.5\linewidth}
\begin{center}
\includegraphics[width=8cm,height=7cm]{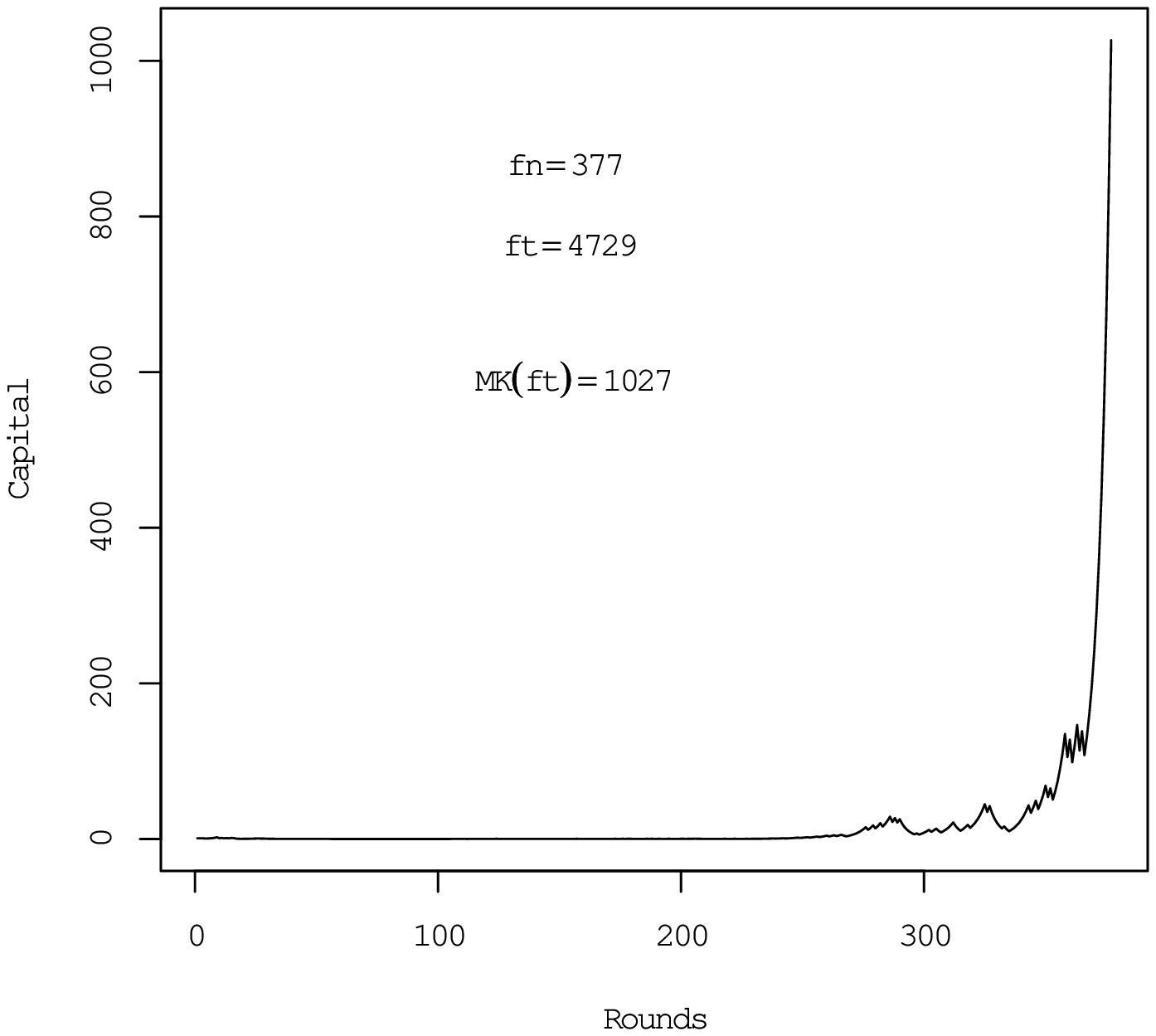}
\vspace*{-11mm}   
\caption{Markov capital process}
\label{fig:7}
\end{center}
\end{minipage}
\begin{minipage}{.5\linewidth}
\begin{center}
\includegraphics[width=8cm,height=7cm]{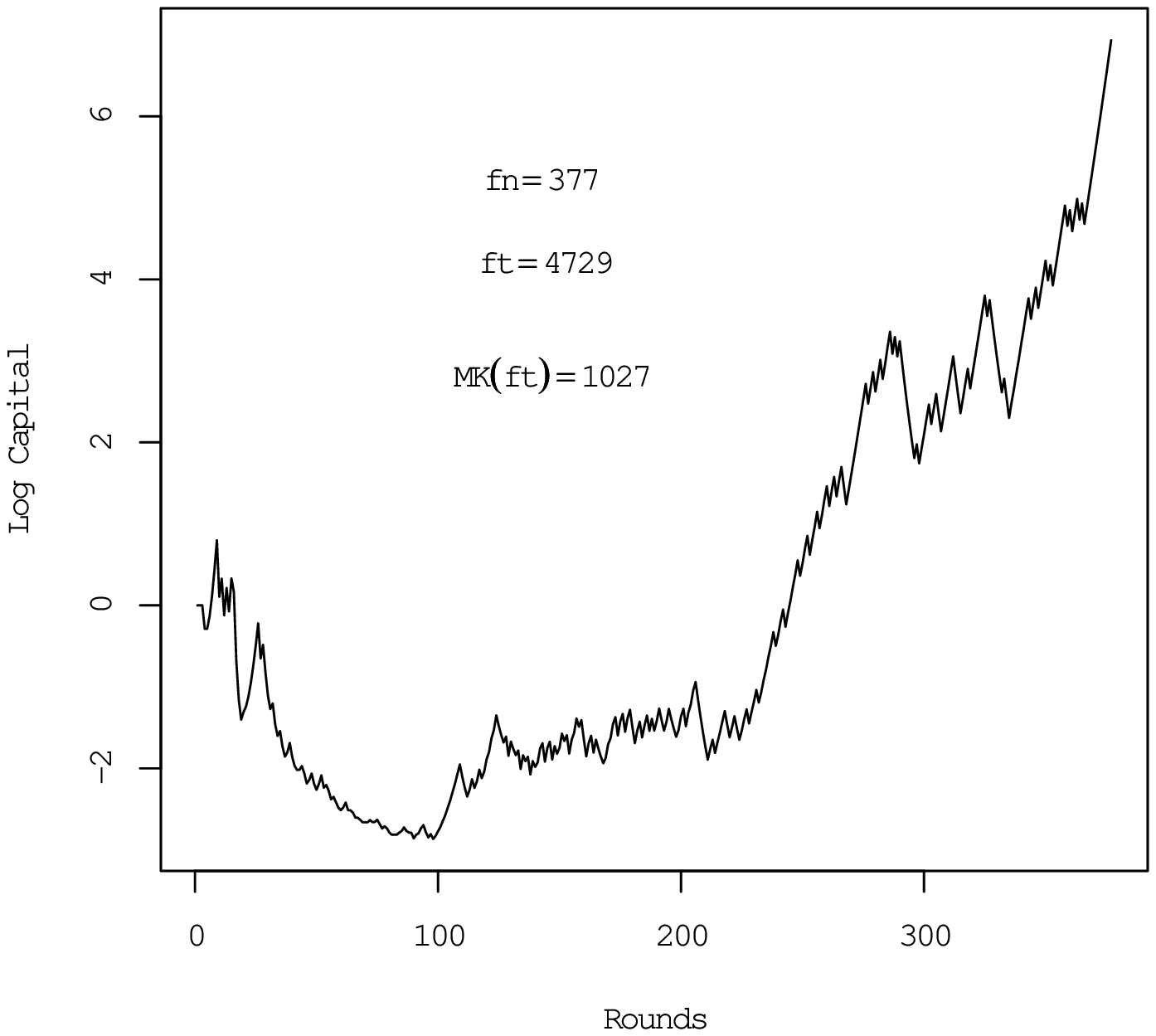}
\vspace*{-11mm}   
\caption{Log Markov capital process}
\label{fig:8}
\end{center}
\end{minipage}
\begin{minipage}{.5\linewidth}
\begin{center}
\includegraphics[width=8cm,height=7cm]{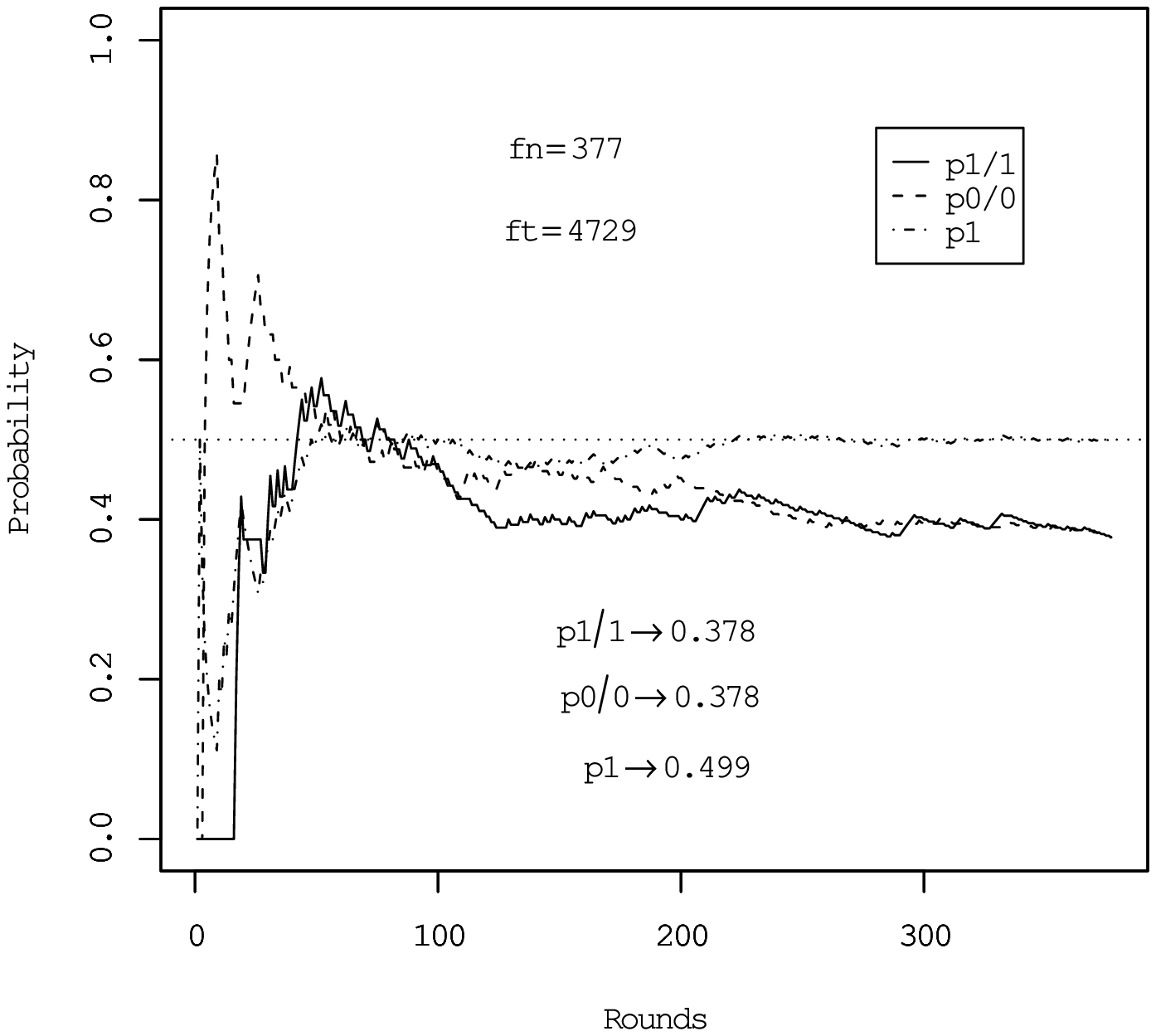}
\vspace*{-11mm}   
\caption{Empirical probability processes}
\label{fig:9}
\end{center}
\end{minipage}
\begin{minipage}{.5\linewidth}
\begin{center}
\includegraphics[width=8cm,height=7cm]{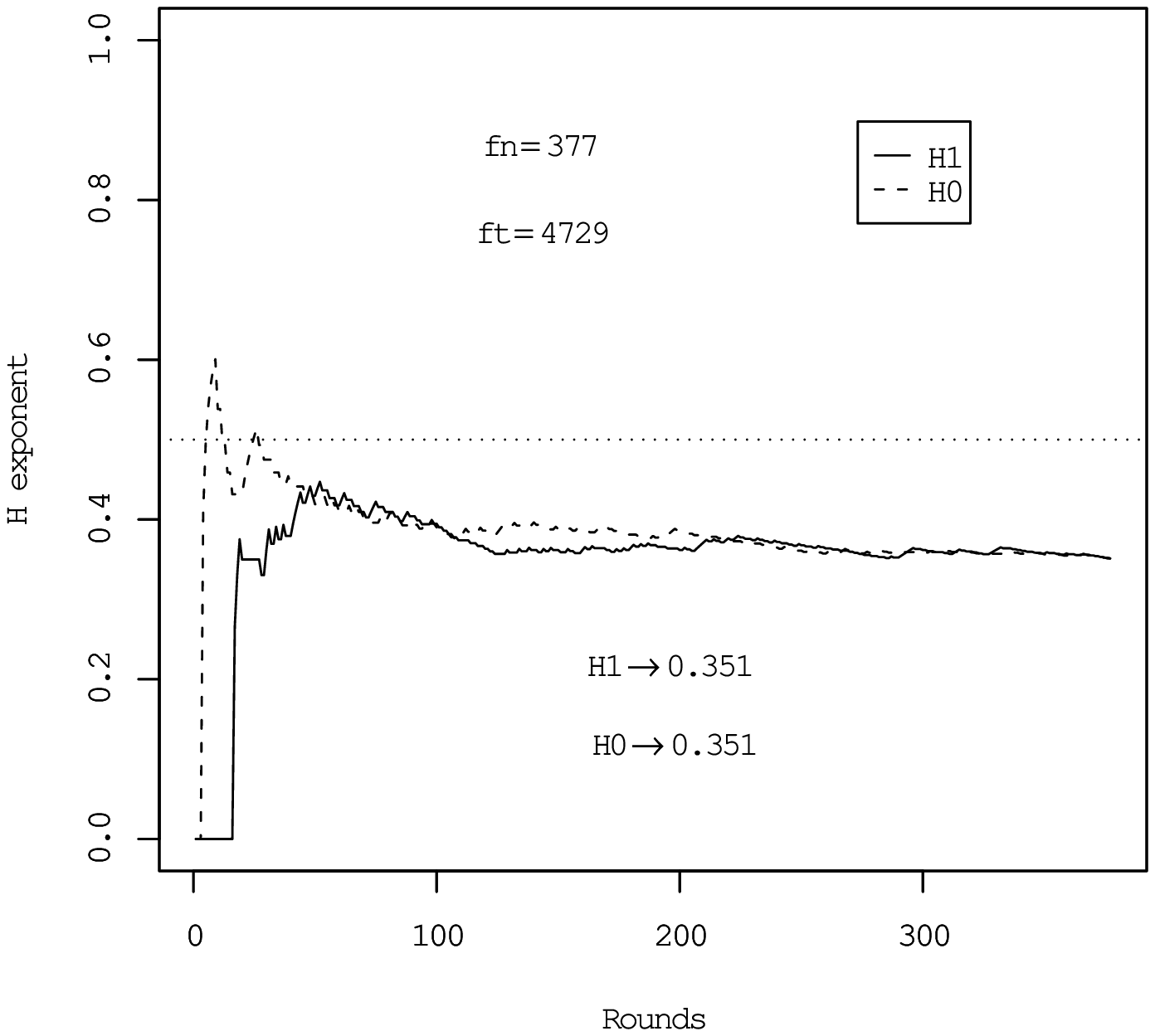}
\vspace*{-11mm}   
\caption{H\"older exponent processes}
\label{fig:10}
\end{center}
\end{minipage}
\end{figure}

\begin{figure}[htbp]
\begin{minipage}{.5\linewidth}
\begin{flushleft}
Figure 11 : Minute prices of Sony\\
Figure 12 : Capital process of Markov\\
\qquad \qquad \quad \ strategy\\
Figure 13 : Log capital process of Markov\\
\qquad \qquad \quad \ strategy\\
Figure 14 : Processes of empirical \\
\qquad \qquad \quad \ probabilities $p^{1|1},\ p^{0|0},\ p^1$\\
Figure 15 : Processes of H\"older exponents \\
\qquad \qquad \quad \ $H^1, H^0$
\end{flushleft}
\end{minipage}
\begin{minipage}{.5\linewidth}
\begin{center}
\includegraphics[width=8cm,height=7cm]{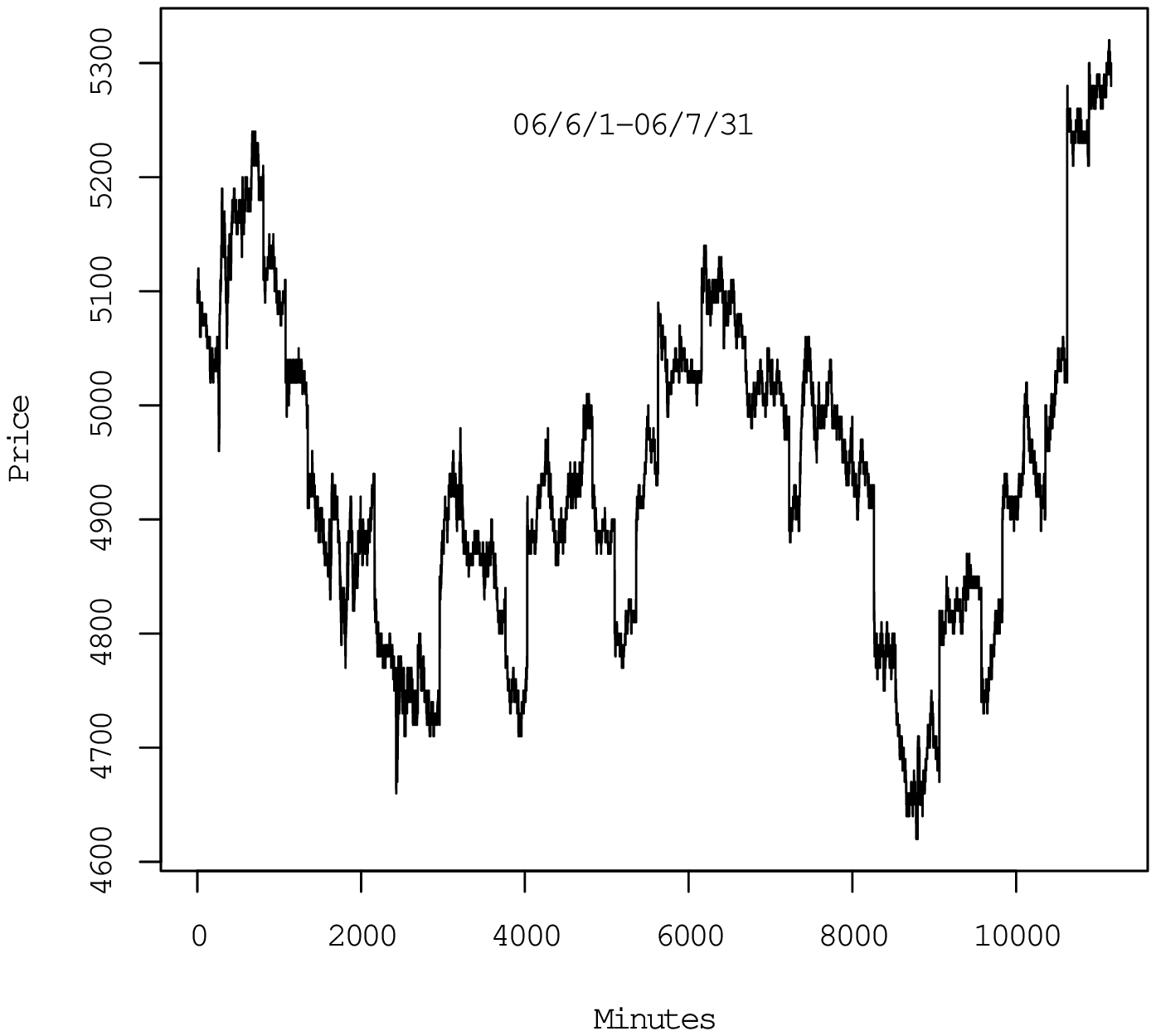}
\vspace*{-11mm}   
\caption{Sony minute prices}
\label{fig:11}
\end{center}
\end{minipage}
\begin{minipage}{.5\linewidth}
\begin{center}
\includegraphics[width=8cm,height=7cm]{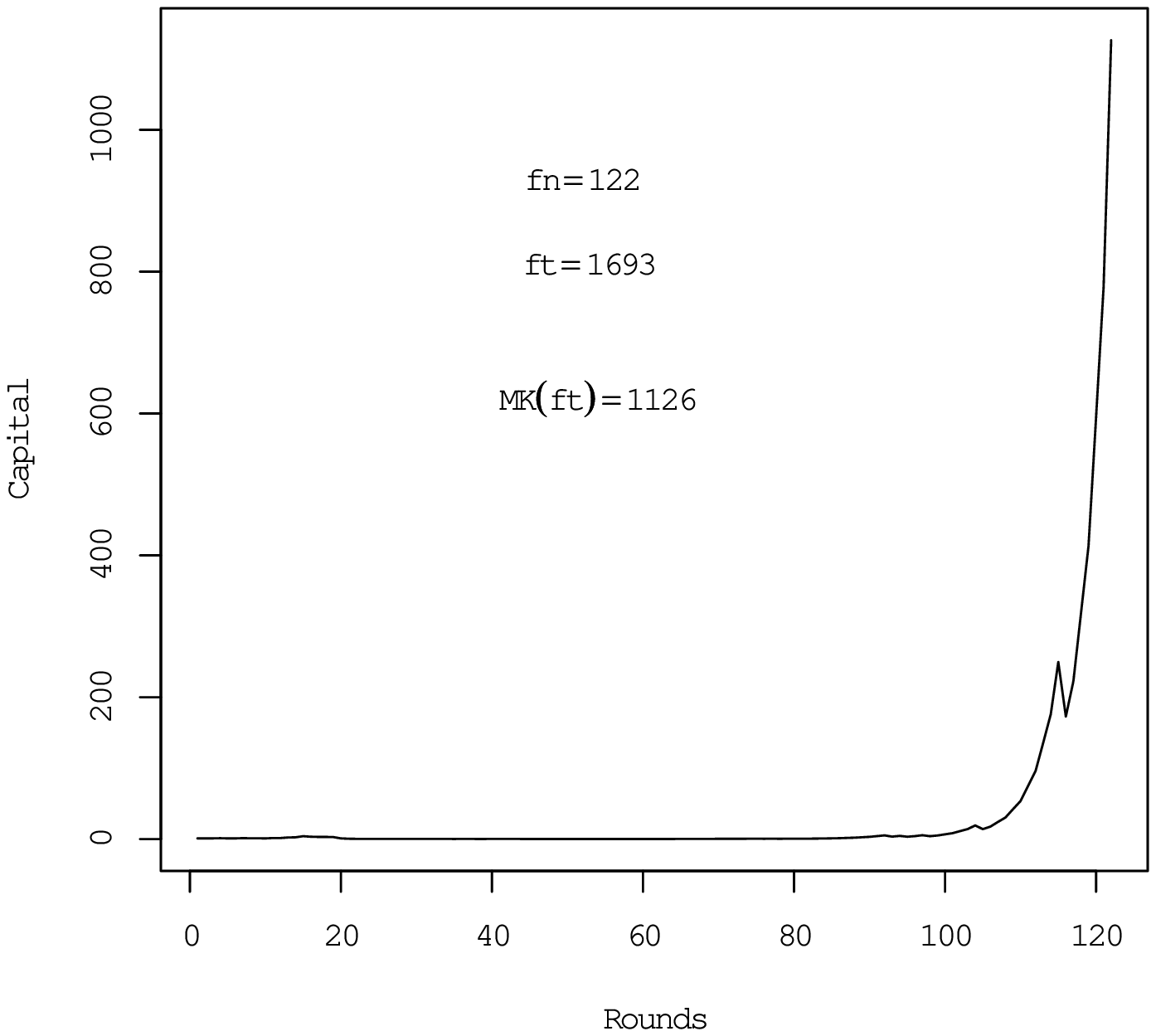}
\vspace*{-11mm}   
\caption{Markov capital process}
\label{fig:12}
\end{center}
\end{minipage}
\begin{minipage}{.5\linewidth}
\begin{center}
\includegraphics[width=8cm,height=7cm]{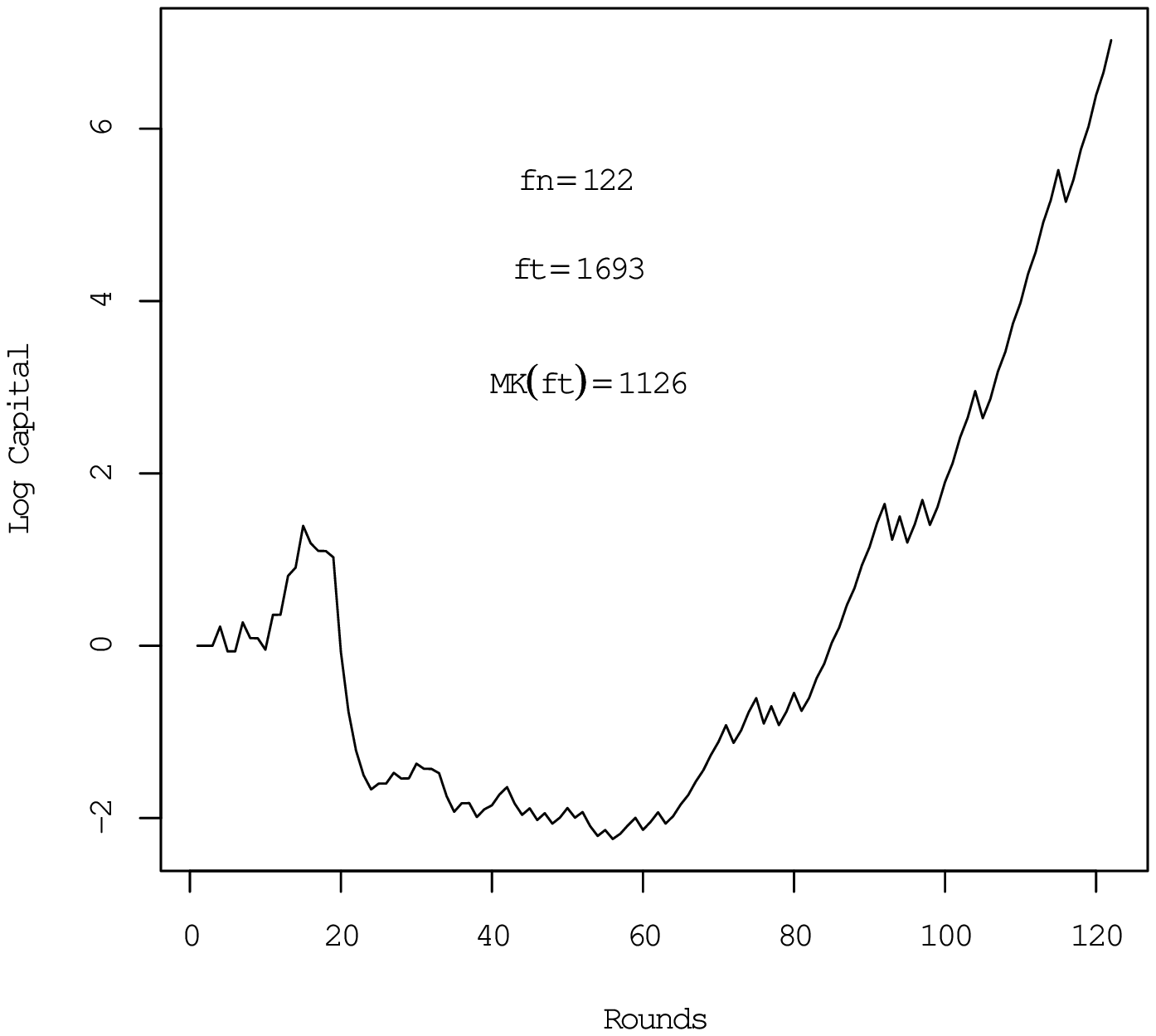}
\vspace*{-11mm}   
\caption{Log Markov capital process}
\label{fig:13}
\end{center}
\end{minipage}
\begin{minipage}{.5\linewidth}
\begin{center}
\includegraphics[width=8cm,height=7cm]{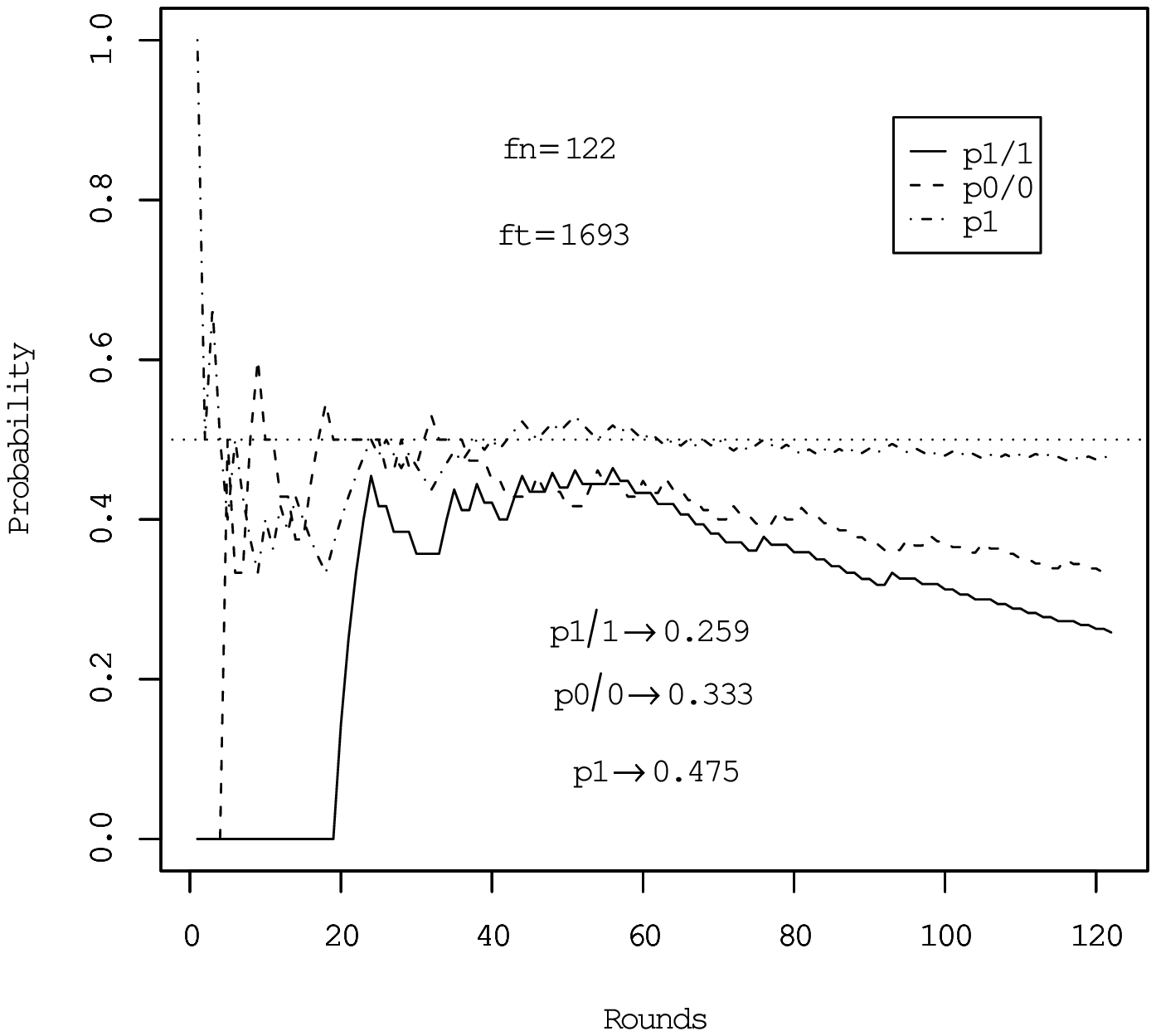}
\vspace*{-11mm}   
\caption{Empirical probability processes}
\label{fig:14}
\end{center}
\end{minipage}
\begin{minipage}{.5\linewidth}
\begin{center}
\includegraphics[width=8cm,height=7cm]{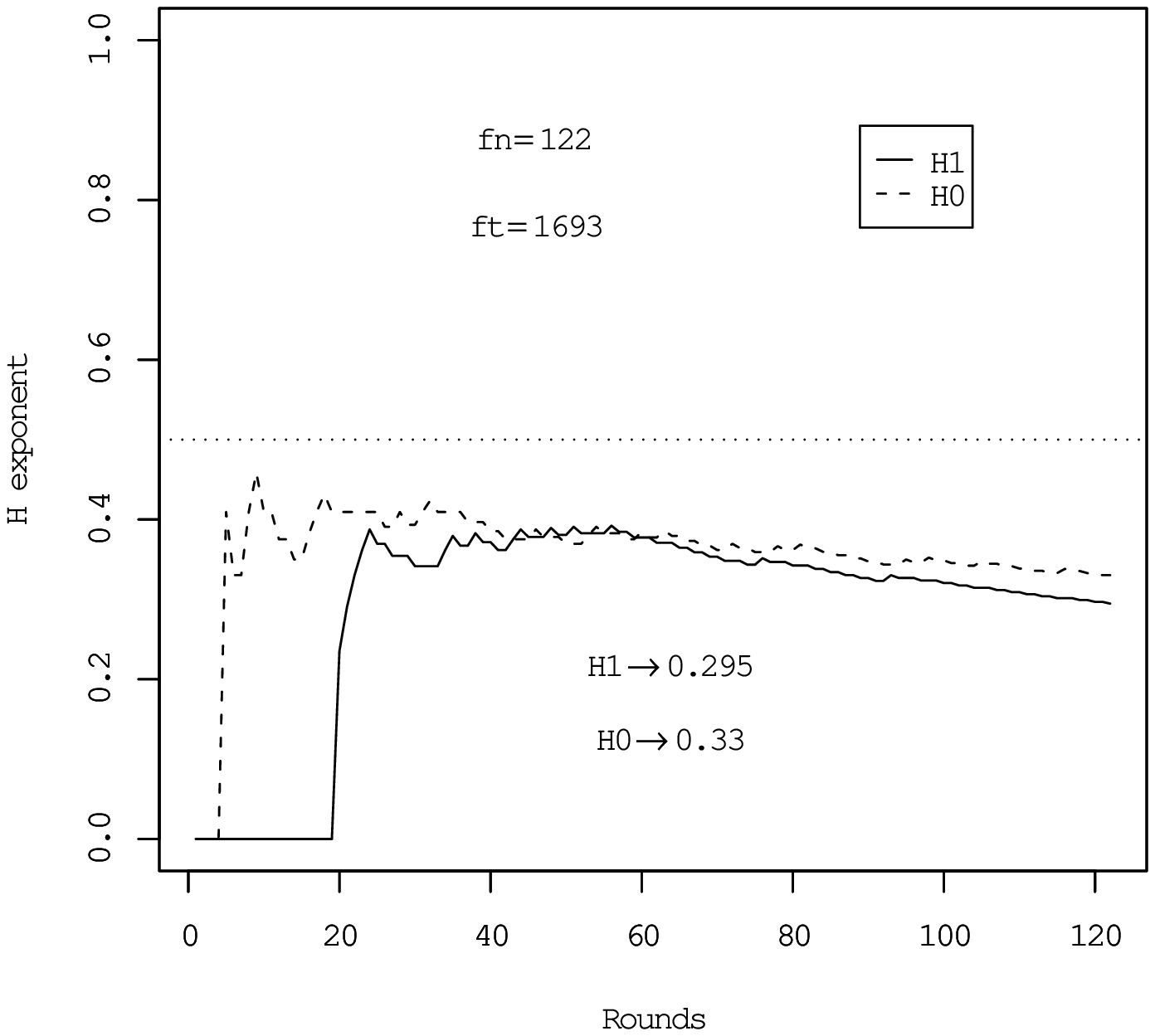}
\vspace*{-11mm}   
\caption{H\"older exponent processes}
\label{fig:15}
\end{center}
\end{minipage}
\end{figure}

\begin{table}[htbp]
\begin{center}
\label{tab:1}
\caption{Typical values for three stock minute prices}
\ 

\begin{tabular}{c|rrrrrrrr}
\hline
& fn & ft & MK(ft) & $p^{1|1}$(ft) & $p^{0|0}$(ft) & $p^1$(ft) & $H^1$(ft) & $H^0$(ft)\\
\hline
SoftBank & 1456 & 8881 & 1084 & 0.423 & 0.449 & 0.489 & 0.372 & 0.384\\
IHI & 377 & 4729 & 1027 & 0.378 & 0.378 & 0.499 & 0.351 & 0.351\\
Sony & 122 & 1693 & 1126 & 0.259 & 0.333 & 0.475 & 0.295 & 0.330\\
\hline
\end{tabular}
\end{center}
\end{table}

\begin{figure}[htbp]
\begin{minipage}{.5\linewidth}
\begin{flushleft}
Figure 16 : Log capital processes of Markov\\
\qquad \qquad \quad strategy with costs : SoftBank\\
Figure 17 : Log capital processes of Markov\\
\qquad \qquad \quad strategy with costs : IHI\\
Figure 18 : Log capital processes of Markov\\
\qquad \qquad \quad strategy with costs : Sony
\end{flushleft}
\end{minipage}
\begin{minipage}{.5\linewidth}
\begin{center}
\includegraphics[width=8cm,height=7cm]{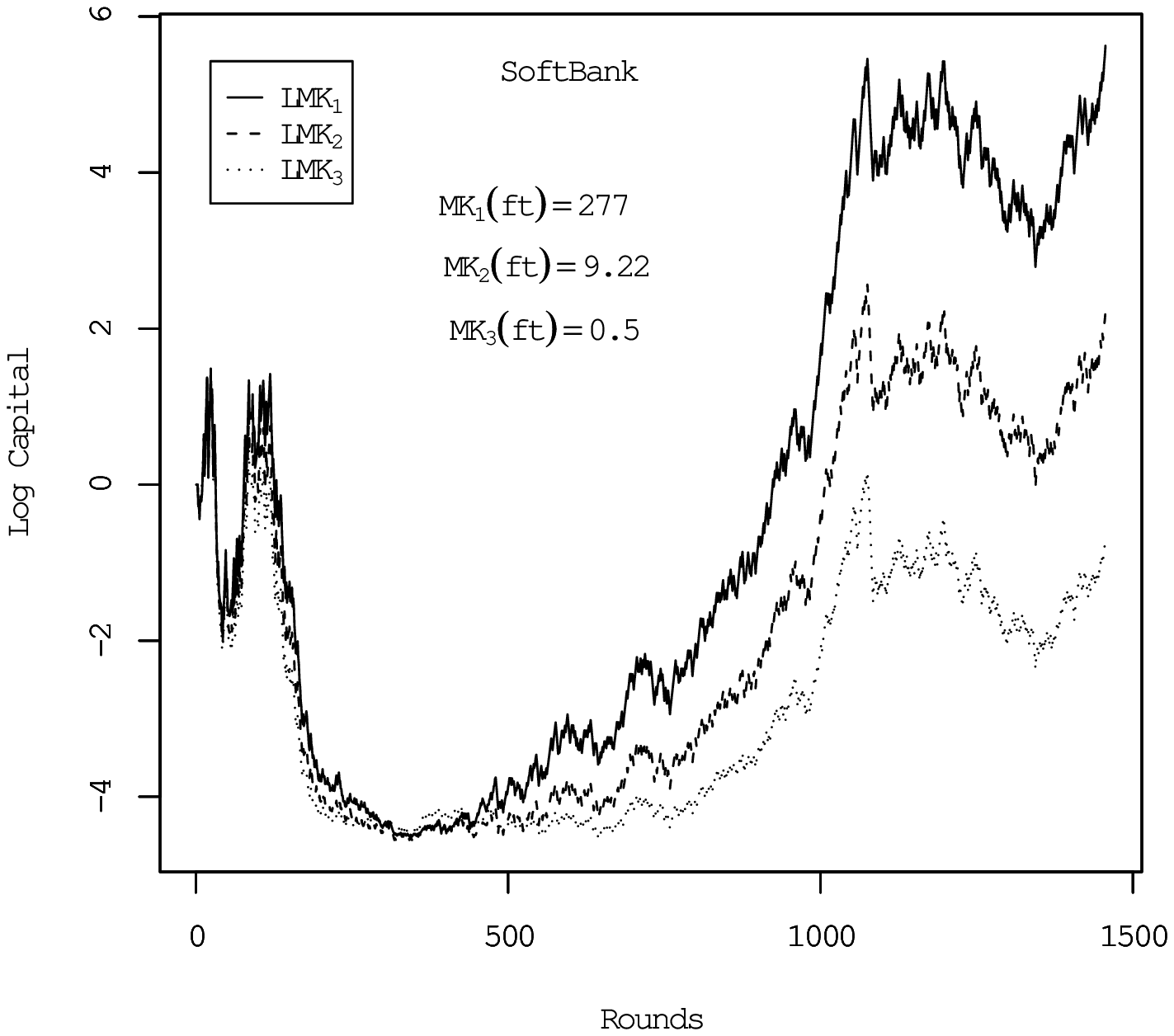}
\vspace*{-11mm}   
\caption{Log Markov capitals with costs}
\label{fig:16}
\end{center}
\end{minipage}
\begin{minipage}{.5\linewidth}
\begin{center}
\includegraphics[width=8cm,height=7cm]{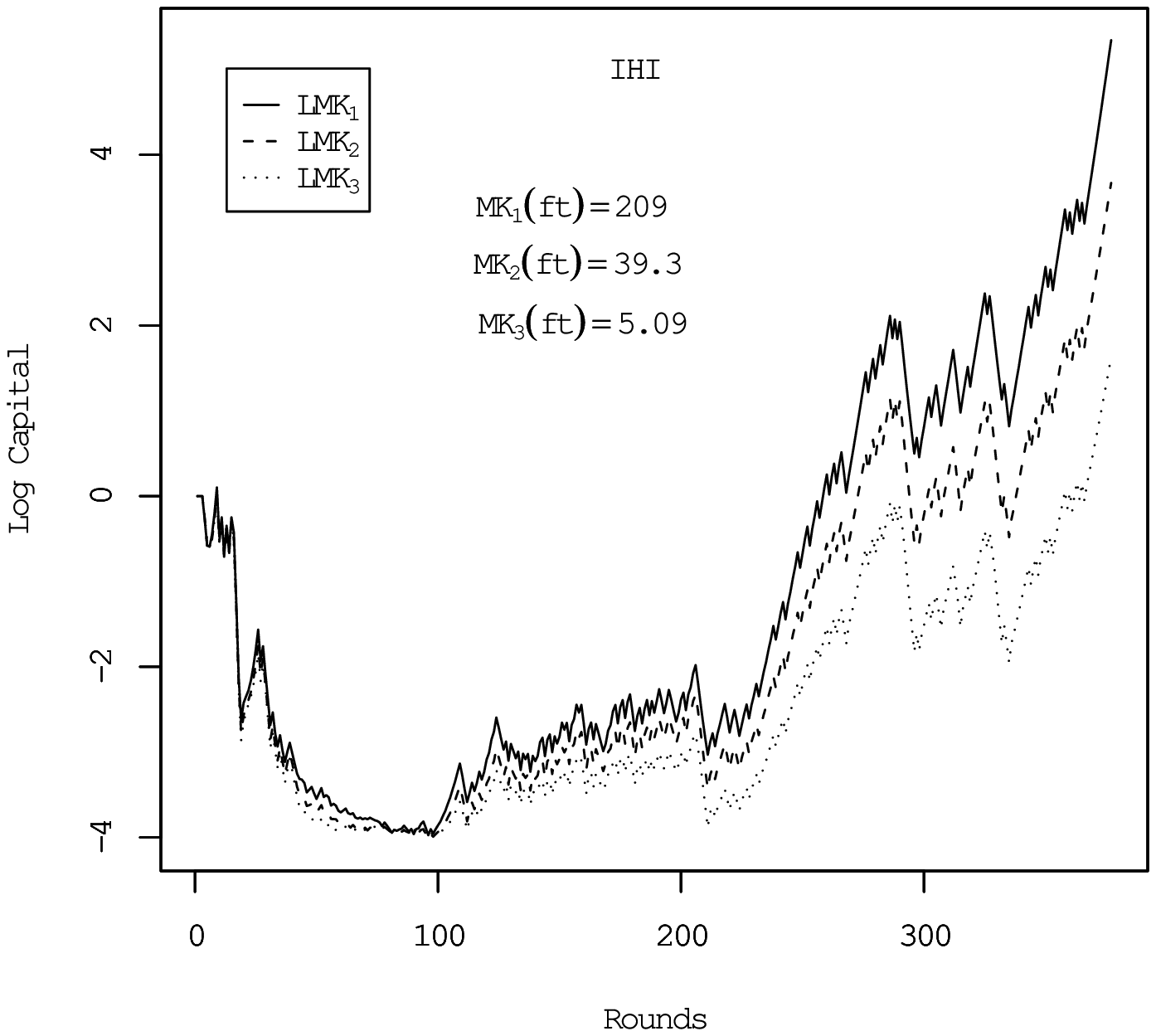}
\vspace*{-11mm}   
\caption{Log Markov capitals with costs}
\label{fig:17}
\end{center}
\end{minipage}
\begin{minipage}{.5\linewidth}
\begin{center}
\includegraphics[width=8cm,height=7cm]{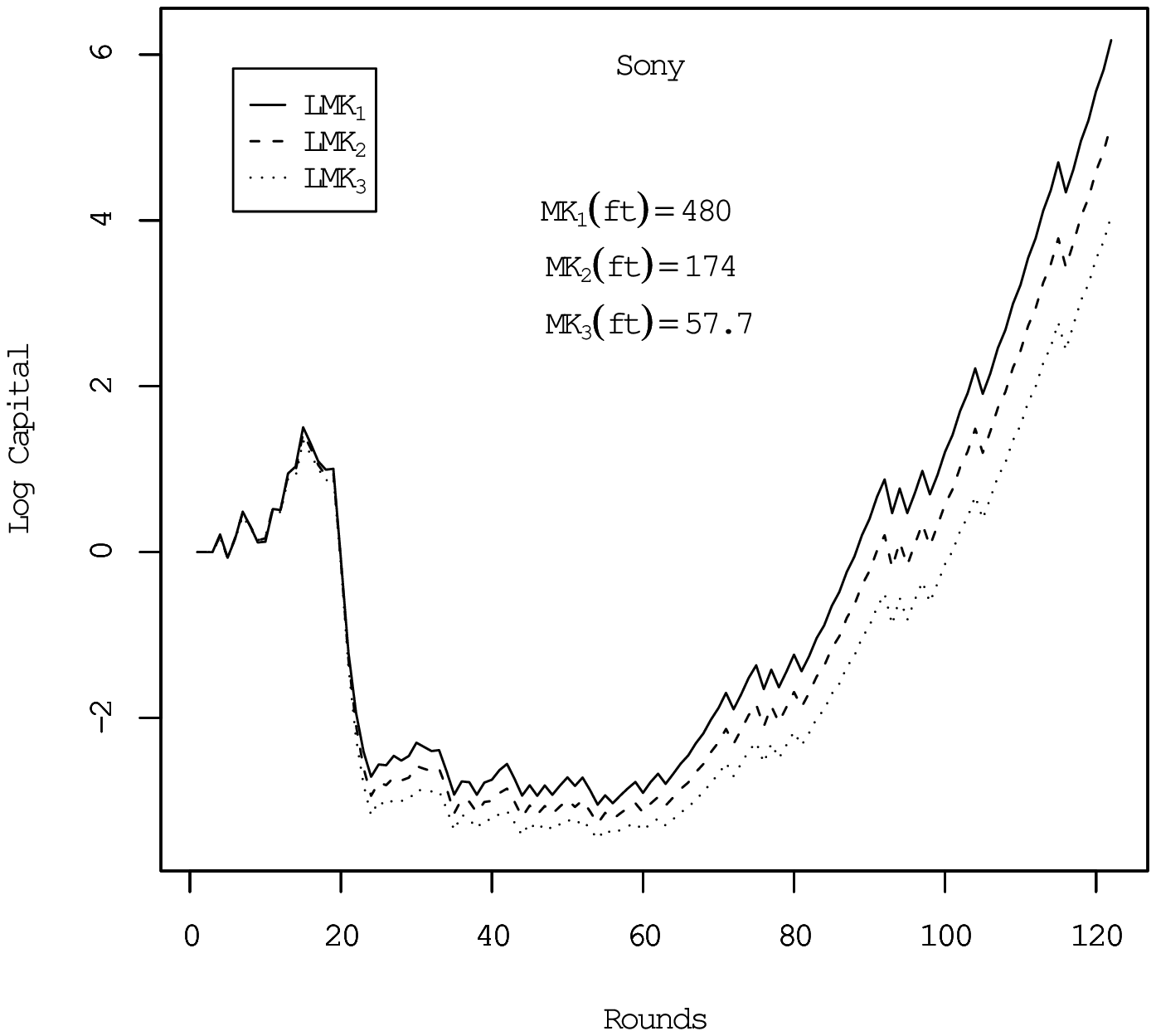}
\vspace*{-11mm}   
\caption{Log Markov capitals with costs}
\label{fig:18}
\end{center}
\end{minipage}
\end{figure}

\clearpage

\begin{table}[htbp]
\begin{center}
\label{tab:2}
\caption{Capital values for three stock minute prices with costs 
$(\eta = 2^{-8})$}
\ 

\begin{tabular}{c|r|rr|rr|rr}
\hline
& fn & $\textrm{MK}_1$(ft) & hn & $\textrm{MK}_2$(ft) & hn & $\textrm{MK}_3$(ft) & hn \\
\hline
SoftBank & 1456 & 277 & 326 & 9.22 & 551 & 0.50 & 636 \\
IHI & 377 & 209 & 25 & 39.3 & 103 & 5.09 & 129 \\
Sony & 122 & 480 & 10 & 174 & 20 & 57.7 & 27 \\
\hline
\end{tabular}
\end{center}
\end{table}

\begin{table}[htbp]
\begin{center}
\label{tab:3}
\caption{Capital values for three stock minute prices with costs 
$(\eta = 2^{-9})$}
\ 

\begin{tabular}{c|r|rr|rr|rr}
\hline
& fn & $\textrm{MK}_1$(ft) & hn & $\textrm{MK}_2$(ft) & hn & $\textrm{MK}_3$(ft) & hn \\
\hline
SoftBank & 1456 & 192051 & 431 & 2694 & 670 & 47.5 & 771 \\
IHI & 377 & 9907 & 8 & 403 & 27 & 22.3 & 65 \\
Sony & 122 & 2759414 & 6 & 462385 & 8 & 78155 & 12 \\
\hline
\end{tabular}
\end{center}
\end{table}

\begin{table}[htbp]
\begin{center}
\label{tab:4}
\caption{Critical values of unit costs for three stock minute prices}
\ 

\begin{tabular}{c|rrr|rrr}
\hline
$\eta = \log (1+\delta)$ & & $\eta = 2^{-8}$  & & &  $\eta = 2^{-9}$ \\
\hline
& $\textrm{MK}$(ft) & hn & $c^*$ & $\textrm{MK}$(ft) & hn & $c^*$ \\
\hline
SoftBank & 0.50 & 636 & $0.05\delta$ & 0.20 & 897 & $0.08\delta$ \\
IHI & 0.81 & 144 & $0.07\delta$ & 0.44 & 91 & $0.08\delta$ \\
Sony & 0.74 & 48 & $0.13\delta$ & 0.57 & 40 & $0.23\delta$ \\
\hline
\end{tabular}
\end{center}
\end{table}

From a practical point of view, the interesting question is not if the markets are efficient, but rather it is whether there are profitable trading opportunities after taking transaction costs into account. 
Thus we examine the effect of transaction costs on the Markov strategy. 
Let Investor pay the transaction cost at round $n$ specified by
\begin{align*}
C_n = c|\Delta M_n|S_{n-1} = c|M_n - M_{n-1}|S_{n-1},\quad c > 0,
\end{align*}
which is proportional to the traded monetary amount. 
Then his capital at the end of round $n$ is expressed as
\begin{align*}
{\cal K}_n &= {\cal K}_{n-1} + M_n(S_n - S_{n-1}) - c|\Delta M_n|S_{n-1}\\
&= {\cal K}_{n-1}(1 + \mu_{n-1}ds_n + \beta_n(ds_n - c\ \textrm{sgn} \beta_n)),
\end{align*}
where
\begin{align*}
ds_n = \frac{S_n - S_{n-1}}{S_{n-1}},\quad 
\mu_{n-1} = \frac{M_{n-1}S_{n-1}}{{\cal K}_{n-1}},\quad 
\beta_n = \frac{\Delta M_nS_{n-1}}{{\cal K}_{n-1}}. 
\end{align*}
At the start of round $n$ Investor again considers the maximization of
\begin{align*}
g_n(\beta_n) = E_{Q^\pm(ds_n|ds_{n-1})}[\log (1 + \mu_{n-1}ds_n + \beta_n(ds_n - c\ \textrm{sgn} \beta_n))|ds_{n-1}]
\end{align*}
with respect to $\beta_n \in {\mathbb R}$, where 
$Q^+(ds_n|ds_{n-1} > 0)$ and $Q^-(ds_n|ds_{n-1} < 0)$ denote the conditional beta-binomial distributions for the Market's limit order type moves. This results in his strategy summarized by the following scheme.
\begin{align*}
&g'_n(+0) > 0\ \Rightarrow\ \textrm{buy the asset of amount}\ 
\beta_n^+{\cal K}_{n-1}\ \textrm{determined by}\ g_n'(\beta_n^+) = 0 , \\
&g'_n(-0) < 0\ \Rightarrow\ \textrm{sell the asset of amount}\ 
-\beta_n^-{\cal K}_{n-1}\ \textrm{determined by}\ g_n'(\beta_n^-) = 0 , \\
&g'_n(+0) \le 0\ \&\ g'_n(-0) \ge 0\ \Rightarrow\ 
\textrm{hold the asset without trade} .
\end{align*}
Figures 16-18 compare the log capital processes of Markov strategy with transaction costs for SoftBank, IHI and Sony. In each figure, three kinds of log capital processes $LMK_1, LMK_2$ and $LMK_3$ are exhibited, which correspond to the unit costs $c = 0.01\delta, c = 0.03\delta$ and $c= 0.05\delta$, respectively. Table 2 $(\eta = 2^{-8})$ and Table 3 $(\eta = 2^{-9})$ show the capital values $\textrm{MK}_1$(ft), $\textrm{MK}_2$(ft), $\textrm{MK}_3$(ft), and the numbers of holding rounds  hn in the total fn rounds are also provided. Table 4 lists the critical values $c^*$ of unit costs which reduce the final amounts of Markov capitals MK(ft) to $\textrm{MK}$(ft) $< 1$ (less than the initial capital) for the first time.  We see that for SoftBank and IHI, the Markov capitals are severely reduced by $c^*$ which is even within ten percent of $\delta$ (limit order size), but these may depend on brands as seen in Sony.

\section{Some discussions}
\label{sec:discussion}

The efficient market hypothesis has been continuously discussed by many researchers.
Some well known researchers (e.g.\ \cite{rubinstein-2001},\cite{malkiel-2005})
argue that EMH is by and large true despite some observed irregularities.
However, from the literature we observe the following general tendencies.
1) 
Random walk hypotheses, without time-scale transformation, 
seem to be more often rejected than  accepted. 
2) There is only few literature dealing directly with
the martingale hypothesis.  This is probably due to the non-parametric
nature of the hypothesis and the difficulty in statistical modeling. 
3) As advocates of EMH argue, even professional investors
do not have  effective investing strategies outperforming the market.
4) Some classes of martingale models, especially those with
varying volatility, have been proposed and fitted to empirical data. But
their theoretical implications for effective investing strategies are not clear.

In this paper we presented a simple general method for directly testing the
hypothesis of martingale, by using limit order type investing strategies
in asset trading games.   The reciprocal of the capital process
of an investing strategy can be used as a $p$-value of test statistic
for testing the hypothesis of martingale property.  By our Markov type strategy
we have shown that the martingale property of some Japanese stocks 
are rejected with very small $p$-values.

It should be noted that our numerical experiments may not be realistic
for two reasons.  First, there is the problem of transaction cost. 
To test the martingale hypothesis, we have used a high frequency
limit-order type strategy. In actual markets high frequency trading
incurs a high trading cost  and we have examined this point in Section \ref{sec:numerical} by proportional-type transaction costs.
We found that they greatly reduce the capital, although this may depend on brands.
Second point is the reaction of the price to the amount of trading.  
In the usual measure-theoretic assumption, the amount of trading
does not affect the price process.  However in actual markets, 
large demand from the traders will immediately affect the price, %
thwarting the possibility of indefinitely large gain. These points may affect the
practical applicability of the proposed strategies, but they do not affect
the conclusion that the martingale hypothesis is rejected.

For the numerical experiments of Section \ref{sec:numerical} we have
tried several grid sizes (common to all processes) 
and showed a grid size $\eta = 2^{-8}$ 
which exhibits a significant result (nominal level of 0.1\%).  Therefore
there is a problem of multiple testing and to adjust for multiple
testing we can use the Bonferroni correction.  Since we have used 
Bayesian type simple strategy and its Markov type variant with only several
choices of grid sizes, the conclusion of Section \ref{sec:numerical} is
clearly valid with 1\% significance level.

\end{document}